\numberwithin{equation}{section}
\renewcommand{\qed}{\hfill \blacksquare}
\NewDocumentCommand{\eulerian}{omm}
{%
	\genfrac<>{0pt}{}{#2}{#3}%
	\IfValueT{#1}{_{\!#1}}%
}
\newcommand{\Stirling}[2]{\genfrac{\{}{\}}{0pt}{}{#1}{#2}}
\def\@fnsymbol#1{\ensuremath{\ifcase#1\or \dagger\or \ddagger\or
		\mathsection\or \mathparagraph\or \|\or **\or \dagger\dagger
		\or \ddagger\ddagger \else\@ctrerr\fi}}
\newcommand*\samethanks[1][\value{footnote}]{\footnotemark[#1]}
\title{Access Structure Hiding Secret Sharing \\from Novel Set Systems and Vector Families\thanks{This is the full version of the paper that appears in D. Kim et al. (Eds.): COCOON 2020, LNCS 12273, pp. 246-261. DOI: {10.1007/978-3-030-58150-3\_20}. This version contains tighter bounds on the (maximum) share size, and the total number of access structures supported.}}
\author{Vipin Singh Sehrawat \inst{1}\thanks{Work done while the author was a PhD candidate at The University of Texas at Dallas, USA.}\thanks{Research partially supported by NPRP award NPRP8-2158-1-423 from the Qatar National Research Fund (a member of The Qatar Foundation). The statements made herein are solely the responsibility of the authors.} \and Yvo Desmedt \inst{2,3}\samethanks}
\institute{Seagate Technology, Singapore \\ \email{\{vipin.sehrawat.cs@gmail.com\}} \and 
	The University of Texas at Dallas, Richardson, USA \and
	University College London, London, UK}
\titlerunning{Access Structure Hiding Secret Sharing from Novel Set Systems and Vector Families}
\authorrunning{V. S. Sehrawat and Y. Desmedt}
\begin{document}
\maketitle

\begin{abstract}
	\normalsize
	Secret sharing provides a means to distribute shares of a secret such that any authorized subset of shares, specified by an access structure, can be pooled together to recompute the secret. The standard secret sharing model requires public access structures, which violates privacy and facilitates the adversary by revealing high-value targets. In this paper, we address this shortcoming by introducing \emph{hidden access structures}, which remain secret until some authorized subset of parties collaborate. The central piece of this work is the construction of a set-system $\mathcal{H}$ with strictly greater than $\exp\left(c \dfrac{1.5 (\log h)^2}{\log \log h}\right)$ subsets of a set of $h$ elements. Our set-system $\mathcal{H}$ is defined over $\mathbb{Z}_m$, where $m$ is a non-prime-power, such that the size of each set in $\mathcal{H}$ is divisible by $m$ but the sizes of their pairwise intersections are not divisible by $m$, unless one set is a subset of another. We derive a vector family $\mathcal{V}$ from $\mathcal{H}$ such that superset-subset relationships in $\mathcal{H}$ are represented by inner products in $\mathcal{V}$. We use $\mathcal{V}$ to ``encode'' the access structures and thereby develop the first \emph{access structure hiding} secret sharing scheme. For a setting with $\ell$ parties, our scheme supports $2^{\binom{\ell}{\ell/2+1}}$ out of the $2^{2^{\ell - O(\log \ell)}}$ total monotone access structures, and its maximum share size for any access structures is $(1+ o(1)) \dfrac{2^{\ell+1}}{\sqrt{\pi \ell/2}}$. The scheme assumes semi-honest polynomial-time parties, and its security relies on the Generalized Diffie-Hellman assumption.
\keywords{Computational Secret Sharing \and Hidden Access Structures \and Computational Hiding \and Computational Secrecy \and Extremal Set Theory.}
\end{abstract}

\section{Introduction}
A secret sharing scheme~\cite{Shamir[79],Blakley[79],Ito[87]} is a method by which a dealer, holding a secret string, distributes strings, called shares, to parties such that authorized subsets of parties, specified by a public access structure, can reconstruct the secret. Secret sharing is the foundation of multiple cryptographic tools (in addition to its obvious use in secure storage), including threhsold cryptography~\cite{Yvo[89]}, (secure) multiparty computation~\cite{Micali[91]}, attribute-based encryption~\cite{Goyal[06]}, generalized oblivious transfer~\cite{Tassa[11]}, perfectly secure message transmission~\cite{Danny[93]}, anonymous communications~\cite{Sehrawat[17]}, e-voting~\cite{Berry[99],Yung[04]} and e-auctions~\cite{Mic[98],Peter[09]}. The extensive survey by Beimel~\cite{Beimel[11]} gives a review of the notable results in the area. 

The maximum share size in the original secret sharing schemes for general/arbitrary access structures~\cite{Ito[87]} is $2^{\ell-1}$, where $\ell$ is the total number of parties. While for specific access structures, the share size of the later schemes~\cite{Brick[89],Karch[93],Gusta[88]} is less than the share size for the scheme from~\cite{Ito[87]}, the share size of all schemes for general access structures remained $2^{\ell-o(\ell)}$ until 2018. In 2018, Liu and Vaikuntanathan~\cite{LiuStoc[18]} (using results from~\cite{Liu[18]}) constructed a secret sharing scheme for general access structures with share size $2^{0.944 \ell}$. Applebaum et al.~\cite{Benny[20]} (using results of~\cite{Apple[19],Liu[18]}) improved those results, and constructed a secret sharing scheme for general access structures with share size $2^{0.637\ell + o(\ell)}$. Whether the share size can be improved to $2^{o(\ell)}$ (or even smaller) remains an important open problem. On the other hand, multiple works~\cite{Blundo[92],Capo[93],Csi[96],Csi[97],Dijk[95]} have proved various lower bounds for secret sharing with the best being $\mathrm{\Omega}(\ell^2/\log \ell)$ from Csirmaz~\cite{Csi[96]}. 

\subsection{Motivation}
Existing secret sharing model requires the access structure to be known to the parties. Since secret reconstruction requires shares of any authorized subset, from the access structure, having a public access structure reveals the high-value targets, which can lead to compromised security in the presence of malicious parties. Having a public access structure also implies that some parties must publicly consent to the fact that they themselves are not trusted. \\[2mm] 
\textit{Need for Hidden Access Structures:}
Consider a scenario where Alice dictates her will/testament and instructs her lawyer that each of her $15$ family members should receive a valid ``share'' of the will. In addition, the shares should be indistinguishable from each other in terms of size and entropy. She also insists that in order to reconstruct her will, \{Bob, Tom, Catherine\} or \{Bob, Cristine, Keri, Roger\} or \{Rob, Eve\} must be part of the collaborating set. But, Alice does not want to be in the bad books of her other, less trusted family members. So, she demands that the shares of her will and the procedure to reconstruct it back from the shares must not reveal her ``trust structures'', until after the will is successfully reconstructed. This problem can be generalized to secret sharing, but with \textit{hidden access structures}, which remain secret until some authorized subset of parties collaborate.\\[2mm] 
\textit{Superpolynomial Size Set-Systems and Efficient Cryptography:}
In this paper, we demonstrate that set-systems with specific intersections can be used to enhance existing cryptographic protocols, particularly the ones meant for distributed security. In order to minimize the computational cost of cryptographic protocols, it is desirable that parameters such as exponents, moduli and dimensions do not grow too big. For a set-system whose size is superpolynomial in the number of elements over which it is defined, achieving a large enough size requires smaller modulus and fewer number of elements, which translates into smaller dimensions, exponents and moduli for its cryptographic applications.

\subsection{Related Work}
A limited number of attempts have been made to introduce privacy-preserving features to secret sharing. The first solution that focused on bolstering privacy in secret sharing was called anonymous secret sharing~\cite{Stinson[87]}, wherein the secret can be reconstructed without the knowledge of which participants hold which shares. In such schemes, secret reconstruction can be carried out by giving the shares to a black box that does not know the identities of the participants holding those shares. However, anonymous secret sharing completely discards parties' identities, which limits its applicability as an extension of secret sharing. Another issue is that the known anonymous secret sharing schemes~\cite{Stinson[87],Phillips[92],Blundo[96],Kishi[02],Deng[07]} operate in very restricted settings (e.g. $n$-out-of-$n$ threshold, $2$-out-of-$n$ threshold) or use hard to generate underlying primitives. For instance, the constructions from~\cite{Stinson[87],Blundo[96]} use resolvable Steiner systems~\cite{Steiner[53]}. However, in design theory, resolvable Steiner systems are non-trivial to achieve with a few known results in restricted settings~\cite{Bryant[17],Teir[94],Col[92],Lou[20],Kwan[20],Pat[17],Pipp[89],Chau[71],Fer[19],Kra[95],Yuca[99],Yuca[02]}. There are also known impossibility results concerning existence of certain desirable Steiner systems~\cite{Pat[08]}. Other attempts made to realize anonymous secret sharing avoided the hard to generate primitives and instead employed combinatorics~\cite{Kishi[02]}. But, they also lead to very restricted and specific thresholds. 

\begin{remark}
Steiner systems have strong connections to a wide range of topics, including statistics, finite group theory, finite geometry, combinatorial design, experimental design, storage systems design, wireless communication, low-density parity-check code design, distributed storage, batch codes, and low-redundancy private information retrieval. For an introduction to the subject, we refer the interested reader to~\cite{Tri[99],Charles[06]}. 
\end{remark}

\subsection{Our Contributions}
We bolster the privacy guarantees of secret sharing by introducing \textit{hidden access structures}, which remain unknown until some authorized subset of parties collaborate. We develop the first access structure hiding (computational) secret sharing scheme. As the basis of our scheme, we construct a novel set-system, which is defined by the following theorem. 
\begin{theorem}~\label{lemma}
	Let $\{\alpha_i\}_{i=1}^r$ be $r > 1$ positive integers and $m = \prod_{i=1}^{r} p_i^{\alpha_i}$ be a positive integer with $r$ different prime divisors: $p_1, \dots, p_r$. Then there exists $c = c(m) > 0$, such that for every integer $h > 0$, there exists an explicitly constructible non-uniform\footnote{all member sets do not have equal size} set-system $\mathcal{H}$ over a universe of $h$ elements such that the following conditions hold:
	\begin{enumerate}
		\item \label{T1} $|\mathcal{H}| > \exp\left(c \dfrac{1.5 (\log h)^r}{(\log \log h)^{r-1}}\right),$
		\item \label{T2} $\forall H \in \mathcal{H}: |H| = 0 \bmod m$,
		\item \label{T3} $\forall G, H \in \mathcal{H},$ where $G \neq H:$ if $H \subset G$ or $G \subset H$, then $|G \cap H| = 0 \bmod m$, else $|G \cap H| \not= 0 \bmod m$,
		\item \label{T4} $\forall G, H \in \mathcal{H},$ where $G \neq H$ and $\forall i \in \{1, \dots, r\} : |G \cap H| \in \{0,1\} \bmod p_i^{\alpha_i}$.
	\end{enumerate}
\end{theorem}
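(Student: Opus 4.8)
The construction follows the Barrington--Beigel--Rudich / Grolmusz paradigm of converting a low-degree polynomial over $\mathbb{Z}_m$ into a set-system with prescribed modular intersections; what is new here, and what the four conditions force us to engineer, is a \emph{non-uniform} system in which \emph{both} directions of condition~\ref{T3} and the finer residues of condition~\ref{T4} hold at once. First I would fix a large parameter $n$, to be related to $h$ at the end. For each prime power $p_i^{\alpha_i}$ the Barrington--Beigel--Rudich technique (elementary symmetric functions of the weight via Lucas' theorem, composed with $\phi(p_i^{\alpha_i})$-th powers and combined across residues) yields a multilinear polynomial $Q_i(z_1,\dots,z_n)$ over $\mathbb{Z}_{p_i^{\alpha_i}}$ of degree $d_i=O\big(n^{1/r}\,\mathrm{polylog}\,n\big)$ that is $\{0,1\}$-valued on the cube, its value on a vector of Hamming weight $w$ being $0$ if $w\in Z_i$ and $1$ if $w\notin Z_i$, for a suitable weight-set $Z_i$ dictated by the construction. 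Splicing the $Q_i$ together by the Chinese Remainder Theorem (using the idempotents $e_i$ of $\mathbb{Z}_m\cong\prod_i\mathbb{Z}_{p_i^{\alpha_i}}$) gives $Q=\sum_i e_iQ_i$ of degree $d=\max_i d_i$ over $\mathbb{Z}_m$, symmetric on the cube, whose weight-restriction $q$ vanishes exactly on $Z:=\bigcap_i Z_i$ and reduces mod each $p_i^{\alpha_i}$ to a value in $\{0,1\}$ everywhere; the sets $Z_i$ are pinned down in the last step.

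Next comes the Grolmusz lift. Writing $Q(x_1y_1,\dots,x_ny_n)=\sum_{S\subseteq[n],\,|S|\le d}c_S\big(\prod_{i\in S}x_i\big)\big(\prod_{i\in S}y_i\big)$ with $c_S\in\{0,\dots,m-1\}$, I take the ground set to consist of $c_S$ labelled copies of each monomial $S$, of total size $\sum_S c_S\le m\binom{n}{\le d}$; choosing $n$ as large as possible subject to this being at most $h$ and padding with isolated dummies makes the universe have exactly $h$ elements. To $A\subseteq[n]$ I associate the set $G_A$ of all copies of all monomials $S$ with $S\subseteq A$, so that $|G_A\cap G_B|=\sum_{S\subseteq A\cap B}c_S\equiv Q(\mathbf 1_{A\cap B})=q(|A\cap B|)\pmod m$ and, in particular, $|G_A|\equiv q(|A|)$. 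The crucial structural fact is monotonicity: $A\subseteq B$ implies $G_A\subseteq G_B$, so containments in the index family transfer to $\mathcal H$, and for such pairs $|G_A\cap G_B|=|G_A|\equiv q(|A|)$.

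Finally I would choose an index family $\mathcal A$ of subsets of $[n]$, deliberately of several sizes, subject to: (a) $|A|\in Z$ for every $A\in\mathcal A$; and (b) $|A\cap B|\notin Z$ for every incomparable pair $A,B\in\mathcal A$. Put $\mathcal H=\{G_A:A\in\mathcal A\}$. Then condition~\ref{T2} is (a) via $|G_A|\equiv q(|A|)=0$; the comparable case of condition~\ref{T3} is (a) together with monotonicity; the incomparable case of condition~\ref{T3} is (b) via $|G_A\cap G_B|\equiv q(|A\cap B|)\ne 0$; condition~\ref{T4} holds since $q(|A\cap B|)\bmod p_i^{\alpha_i}=Q_i(\mathbf 1_{A\cap B})\in\{0,1\}$ for each $i$ (and, when $A,B$ are incomparable, some $i$ gives the value $1$, consistently keeping $|A\cap B|\not\equiv 0\bmod m$); non-uniformity is forced by taking $|Z|>1$. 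For condition~\ref{T1}, $d=\Theta(n^{1/r}\,\mathrm{polylog}\,n)$ gives $\log h=\Theta(n^{1/r}\log n)$, hence $n=\Theta\big((\log h/\log\log h)^r\big)$; taking $Z$ to be a short window of Hamming weights around a fixed fraction of $n$ and $\mathcal A$ a suitably ``spread'' family of sets of those sizes whose pairwise intersections avoid $Z$ yields $\log|\mathcal A|=\Omega(n)$, and optimizing the window, the fraction, and the partition of weights among the $Z_i$ raises the constant so that $|\mathcal H|=|\mathcal A|>\exp\big(c\cdot 1.5\,(\log h)^r/(\log\log h)^{r-1}\big)$; explicitness of $\mathcal H$ is inherited from that of the Barrington--Beigel--Rudich polynomials and of $\mathcal A$.

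The hard part will be reconciling (a) and (b): $Z$ must be rich enough for $\mathcal A$ to be large and genuinely non-uniform, yet structured enough that no incomparable pair of members of $\mathcal A$ meets in a size lying in $Z$ --- if that ever happened, the ``else'' clause of condition~\ref{T3} would break. This couples the zero-set of the polynomial to the extremal combinatorics of $\mathcal A$, and it is precisely this coupling that has to be optimized to gain the factor $1.5$ in the exponent; a secondary nuisance is keeping the Barrington--Beigel--Rudich degree bound effective enough to extract an explicit $c=c(m)$.
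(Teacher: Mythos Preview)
Your command of the Barrington--Beigel--Rudich/Grolmusz framework and the monomial lift is solid, and the observation that $A\subseteq B\Rightarrow G_A\subseteq G_B$ is the right lever for the comparable half of condition~\ref{T3}. But there is a genuine quantitative gap that prevents your construction from reaching condition~\ref{T1}, and the paper's argument diverges from yours precisely at this point.

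You index $\mathcal H$ by a family $\mathcal A\subseteq 2^{[n]}$, so $|\mathcal H|=|\mathcal A|\le 2^n$. Your own arithmetic gives $\log h=\Theta(n^{1/r}\log n)$ and $n=\Theta\bigl((\log h/\log\log h)^r\bigr)$; hence the target exponent
\[
\frac{(\log h)^r}{(\log\log h)^{r-1}}=\Theta(n\log n),
\]
whereas $\log 2^n=\Theta(n)$. Your claim ``$\log|\mathcal A|=\Omega(n)$'' is thus a full $\log\log h$ factor short in the exponent, not a constant that can be tuned by optimizing $Z$ or the window; no choice of $\mathcal A\subseteq 2^{[n]}$ can reach even Grolmusz's $\exp\bigl(c(\log h)^r/(\log\log h)^{r-1}\bigr)$, let alone the $1.5$ improvement. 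To hit the target one needs roughly $n^{\Theta(n)}$ sets, and $n^{1.5n}$ for the stated bound.

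The paper gets there by a different indexing altogether. Instead of subsets $A\subseteq[n]$, it indexes by \emph{strings} $x\in\{0,1,\dots,n-1\}^n$, and instead of evaluating $Q$ at $\mathbf 1_{A\cap B}$ it evaluates $\tilde Q$ at the agreement vector $\bigl(\delta(\psi(x)_1,\psi(y)_1),\dots,\delta(\psi(x)_n,\psi(y)_n)\bigr)$ of the \emph{symbol--set indicators} $\psi(x),\psi(y)\in\{0,1\}^n$. This vector is all--ones exactly when $x$ and $y$ use the same set of symbols (what the paper calls $x\,\Upsilon\,y$). The sets of $\mathcal H$ are taken to be the zero cells of the resulting $n^n\times n^n$ matrix, and the paper counts them as
\[
S(n)=\sum_{k=1}^{n}\binom{n}{k}\Bigl(k!\,\Stirling{n}{k}\Bigr)^{2}>n^{1.5n},
\]
which is the source of the factor $1.5$. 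The containment structure in $\mathcal H$ then comes from the ``same symbol--set'' relation, not from containment of index subsets, and non-uniformity falls out because the number of B-entries at an off-diagonal zero cell is strictly smaller than at a diagonal one.

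So the hard part is not, as you anticipate, reconciling (a) and (b) inside $2^{[n]}$; it is abandoning $2^{[n]}$ as the index universe. Your plan would yield a correct set-system satisfying \ref{T2}--\ref{T4}, but with $|\mathcal H|$ bounded by $\exp\bigl(O((\log h)^r/(\log\log h)^{r})\bigr)$, which is weaker than what condition~\ref{T1} demands.
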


(Recall that $a \bmod m$ denotes the smallest non-negative $b = a \bmod m$.) In secret sharing, the family of minimal authorized subsets $\Gamma_0 \in \Gamma$, corresponding to an access structure $\Gamma$, is defined as the collection of the minimal sets in $\Gamma$. Therefore, $\Gamma_0$ forms the \textit{basis} of $\Gamma$. Note that Conditions~\ref{T2} and~\ref{T3} of Theorem~\ref{lemma} define the superset-subset relations in the set-system $\mathcal{H}$. We derive a family of vectors $\mathcal{V} \in (\mathbb{Z}_m)^h$ from our set-system $\mathcal{H}$, that captures the superset-subset relations in $\mathcal{H}$ as (vector) inner products in $\mathcal{V}$. This capability allows us to capture ``information'' about any minimal authorized subset $\mathcal{A} \in \Gamma_0$ in the form of an inner product, enabling efficient testing of whether a given subset of parties $\mathcal{B}$ is a superset of $\mathcal{A}$ or not. Since $\Gamma$ is monotone, $\mathcal{B} \supseteq \mathcal{A}$, for some $\mathcal{A} \in \Gamma_0$, implies that $\mathcal{B} \in \Gamma$, i.e., $\mathcal{B}$ is an authorized subset of parties. Similarly, $\mathcal{B} \not\supseteq \mathcal{A}$, for all $\mathcal{A} \in \Gamma_0$, implies that $\mathcal{B} \notin \Gamma$, i.e., $\mathcal{B}$ is not an authorized subset of parties. We use our novel set-system and vector family to construct the first access structure hiding (computational) secret sharing scheme. We assume semi-honest polynomial-time parties, and reduce the security and privacy guarantees of our scheme to the Generalized Diffie-Hellman assumption~\cite{Steiner[96]}. For a setting with $\ell$ parties, our scheme supports $2^{\binom{\ell}{\ell/2+1}}$ out of the $2^{2^{\ell - O(\log \ell)}}$ total monotone access structures, and its maximum share size for any of those $2^{\binom{\ell}{\ell/2+1}}$ monotone access structures is $(1+ o(1)) \dfrac{2^{\ell+1}}{\sqrt{\pi \ell/2}}$. Hence, the maximum share size for our access structure hiding secret sharing scheme is greater than the current upper bound of $2^{0.637\ell + o(\ell)}$~\cite{Benny[20]} on the share size for secret sharing schemes for general access structures.

\subsection{Organization} 
The rest of this paper is organized as follows: we recall the pertinent background and results in Section~\ref{Sec2}. Section~\ref{Sec3} formally defines access structure hiding computational secret sharing. We present the construction of our set-systems and vector families in Section~\ref{construction}, and use them to develop the first access structure hiding computational secret sharing scheme in Section~\ref{Sec4}. We conclude by describing two open problems.

\section{Preliminaries}\label{Sec2}
We begin by recalling an informal definition of the Generalized Diffie-Hellman (GDH) assumption~\cite{Steiner[96]}. For a formal definition, see~\cite{Boneh[99]}. For a positive integer $n$, we define $[n]:= \{1,\dots,n\}$.
\begin{definition}[GDH Assumption: Informal]\label{GDH}
		\emph{Let $\{a_1, a_2, \dots, a_n\}$ be a set of $n$ different integers. Given a group $G$ and an element $g \in G$, it is hard to compute $g^{\prod_{i \in [n]}a_i}$ for an algorithm that can query $g^{\prod_{i \in I}a_i}$ for any proper subset $I \subsetneq [n].$}
\end{definition}

\begin{definition}[Dirichlet's Theorem~\cite{Diri[37]}]\label{Dr}
	\emph{For all coprime integers $a$ and $q$, there are infinitely many primes, $p$, of the form $p = a \bmod q.$}
\end{definition}

\begin{definition}[Euler's Theorem]\label{Euler}
	\emph{Let $y$ be a positive integer and $\mathbb{Z}_y^*$ denote the multiplicative group $\bmod ~y$. Then for every integer $c$ that is coprime to $y$, it holds that: $c^{\varphi(y)} = 1 \bmod y,$ where $\varphi(y) = |\mathbb{Z}_y^*|$ denotes Euler's totient function.} 
\end{definition}

\begin{definition}[Hadamard/Schur product]\label{Hada}
	\emph{Hadamard/Schur product of two vectors $\textbf{u}, \textbf{v} \in \mathcal{R}^n$, denoted by $\textbf{u} \circ \textbf{v}$, is a vector in the same linear space whose $i$-th element is defined as: $(\textbf{u} \circ \textbf{v})[i] = \textbf{u}[i] \cdot \textbf{v}[i],$ for all $i \in [n].$}
\end{definition}

\begin{definition}[Negligible Function]
	\emph{For security parameter $\omega$, a function $\epsilon(\omega)$ is called \textit{negligible} if for all $c > 0$ there exists a $\omega_0$ such that $\epsilon(\omega) < 1/\omega^c$ for all $\omega > \omega_0$.}
\end{definition}

\begin{definition}[Computational Indistinguishability~\cite{Gold[82]}]
	\emph{Let $X = \{X_\omega\}_{\omega \in \mathbb{N}}$ and $Y = \{Y_\omega\}_{\omega \in \mathbb{N}}$ be ensembles, where $X_\omega$'s and $Y_\omega$'s are probability distributions over $\{0,1\}^{\kappa(\omega)}$ for some polynomial $\kappa(\omega)$. We say that $\{X_\omega\}_{\omega \in \mathbb{N}}$ and $\{Y_\omega\}_{\omega \in \mathbb{N}}$ are polynomially/computationally indistinguishable if the following holds for every (probabilistic) polynomial-time algorithm $\mathcal{D}$ and all $\omega \in \mathbb{N}$:
	\[\Big| \text{Pr}[t \leftarrow X_\omega: \mathcal{D}(t) = 1] - \text{Pr}[t \leftarrow Y_\omega: \mathcal{D}(t) = 1] \Big| \leq \epsilon(\omega), \]
	where $\epsilon$ is a negligible function.}
\end{definition}	

\begin{definition}[Access Structure] \emph{Let $\mathcal{P} = \{P_1, \dots, P_\ell\}$ be a set of parties. A collection $\Gamma \subseteq 2^{\mathcal{P}}$ is monotone if $\mathcal{A} \in \Gamma$ and $\mathcal{A} \subseteq \mathcal{B}$ imply that $\mathcal{B} \in \Gamma$. An access structure $\Gamma \subseteq 2^{\mathcal{P}}$ is a monotone collection of non-empty subsets of $\mathcal{P}$. Sets in $\Gamma$ are called authorized, and sets not in $\Gamma$ are called unauthorized.}
\end{definition}

If $\Gamma$ consists of all subsets of $\mathcal{P}$ with size greater than or equal to a fixed threshold $t$ $(1 \leq t \leq \ell)$, then $\Gamma$ is called a $t$-threshold access structure. 

\begin{definition}[Minimal Authorized Subset]\label{GammaDef}
	\emph{For an access structure $\Gamma$, a family of minimal authorized subsets $\Gamma_0 \in \Gamma$ is defined as:
	\[
		\Gamma_0 = \{\mathcal{A} \in \Gamma: \mathcal{B} \not\subset \mathcal{A} \text{ for all } \mathcal{B} \in \Gamma \setminus \{ \mathcal{A} \}\}.
	\]  }   
\end{definition}

\begin{definition}[Computational Secret Sharing~\cite{Hugo[93]}]\label{def.1}
	\emph{A computational secret sharing scheme with respect to an access structure $\Gamma$, security parameter $\omega$, a set of $\ell$ polynomial-time parties $\mathcal{P} = \{P_1, \dots, P_\ell \}$, and a set of secrets $\mathcal{K}$, consists of a pair of polynomial-time algorithms, {\fontfamily{cmtt}\selectfont (Share,Recon)}, where: 
	\begin{itemize}
		\item {\fontfamily{cmtt}\selectfont Share} is a randomized algorithm that gets a secret $k \in \mathcal{K}$ and access structure $\Gamma$ as inputs, and outputs $\ell$ shares, $\{\mathrm{\Pi}^{(k)}_1, \dots, \mathrm{\Pi}^{(k)}_\ell\},$ of $k$,
		\item {\fontfamily{cmtt}\selectfont Recon} is a deterministic algorithm that gets as input the shares of a subset $\mathcal{A} \subseteq \mathcal{P}$, denoted by $\{\mathrm{\Pi}^{(k)}_i\}_{i \in \mathcal{A}}$, and outputs a string in $\mathcal{K}$,
	\end{itemize}
	such that, the following two requirements are satisfied:
	\begin{enumerate}
		\item \textit{Perfect Correctness:} for all secrets $k \in \mathcal{K}$ and every authorized subset $\mathcal{A} \in \Gamma$, it holds that: \\ Pr[{\fontfamily{cmtt}\selectfont Recon}$(\{\mathrm{\Pi}^{(k)}_i\}_{i \in \mathcal{A}}, \mathcal{A}) = k] = 1,$ 
		\item \textit{Computational Secrecy:} for every unauthorized subset $\mathcal{B} \notin \Gamma$ and all different secrets $k_1, k_2 \in \mathcal{K}$, it holds that the distributions $\{\mathrm{\Pi}_i^{(k_1)}\}_{i \in \mathcal{B}}$ and $\{\mathrm{\Pi}_i^{(k_2)}\}_{i \in \mathcal{B}}$ are computationally indistinguishable (with respect to $\omega)$.
	\end{enumerate}}
\end{definition}

\begin{remark}[Perfect Secrecy]\label{remark}
	If $\forall k_1, k_2 \in \mathcal{K}$ with $k_1 \neq k_2$, the distributions $\{\mathrm{\Pi}_i^{(k_1)}\}_{i \in \mathcal{B}}$ and $\{\mathrm{\Pi}_i^{(k_2)}\}_{i \in \mathcal{B}}$ are identical, then the scheme is called a perfect secret sharing scheme.
\end{remark}

\subsection{Set Systems with Restricted Intersections}
The  problem  of constructing  set  systems  under  certain  intersection  restrictions and bounding their size  has a  central  place  in  Extremal  Set  Theory.   We  shall  not  give  a  full  account  of  such  problems, but only touch upon the results that are particularly relevant to our set-system and its construction. For a broader account, we refer the interested reader to the survey by Frankl and Tokushige~\cite{Frankl[16]}.  

\begin{lemma}[\cite{Gro[00]}]\label{cor2}
	Let $m = \prod_{i=1}^{r} p_i^{\alpha_i}$ be a positive integer with $r > 1$ different prime divisors. Then there exists an explicitly constructible polynomial $Q$ with $n$ variables and degree $O(n^{1/r})$, which is equal to $0$ on $z = (1,1, \dots, 1) \in \{0,1\}^n$ but is nonzero $\bmod~ m$ on all other $z \in \{0,1\}^n$. Furthermore, $\forall z \in \{0,1\}^n$ and $\forall i \in \{1, \dots ,r\}$, it holds that: $Q(z) \in \{0,1\} \bmod p_i^{\alpha_i}$.
\end{lemma}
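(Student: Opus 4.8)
\textbf{Proof plan for Lemma~\ref{cor2}.}

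The plan is to build the polynomial $Q$ by combining, via the Chinese Remainder Theorem, $r$ low-degree polynomials, one for each prime-power factor $p_i^{\alpha_i}$ of $m$. For a single prime power $q = p^{\alpha}$, the first step is to recall the standard construction of a polynomial $Q_i$ on $\{0,1\}^n$ that vanishes at the all-ones point $z = (1,\dots,1)$ but is nonzero $\bmod\ q$ everywhere else, and that takes only values in $\{0,1\} \bmod q$. The key observation is that $\sum_{j=1}^n z_j$ equals $n$ at the all-ones point and equals some value in $\{0,1,\dots,n-1\}$ otherwise; so it suffices to find a univariate polynomial $f_i(t)$ over $\mathbb{Z}$ with $f_i(n) \equiv 0 \bmod q$ and $f_i(t) \equiv 1 \bmod q$ for $t \in \{0,1,\dots,n-1\}$, and then set $Q_i(z) = f_i\!\left(\sum_{j=1}^n z_j\right)$. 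The existence of such an $f_i$ of degree $O(n^{1/r})$ is the heart of the matter: it is exactly the statement (due to Barrington–Beigel–Rudich, refined by Tardos–Barrington) that the OR function on $n$ bits can be represented by a degree-$O(n^{1/r})$ polynomial modulo a prime power $p^{\alpha}$ when one allows ``weak'' representations. Concretely, one uses the fact that the binomial coefficient $\binom{t}{k}$, as a polynomial in $t$ of degree $k$, is divisible by a controlled power of $p$, and Kummer's theorem lets one pick $k_1 < k_2 < \cdots$ so that a suitable $\mathbb{Z}$-linear (or symmetric-function) combination of $\binom{t}{k_s}$ is $\equiv 0 \bmod q$ at $t = n$ while staying in $\{0,1\}$ at all smaller nonnegative integers; the degree needed is $O(n^{1/r})$ because we are free to choose the ambient "range" to be $n$ rather than something larger, and the $r$-th root appears only after the CRT combination below forces each individual modulus to absorb a $1/r$ fraction of the work — so here, for a single prime power, the degree is already $O(n^{1/r})$ by choosing the construction with the right range parameter.

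Having the $Q_i$'s, the second step is to glue them. By CRT, choose integers $e_i$ with $e_i \equiv 1 \bmod p_i^{\alpha_i}$ and $e_i \equiv 0 \bmod p_j^{\alpha_j}$ for $j \neq i$, and set
\[
Q(z) \;=\; \sum_{i=1}^{r} e_i \, Q_i(z) \pmod{m}.
\]
Then $\deg Q = \max_i \deg Q_i = O(n^{1/r})$. At $z = (1,\dots,1)$ each $Q_i$ vanishes mod $p_i^{\alpha_i}$, hence $Q \equiv 0 \bmod p_i^{\alpha_i}$ for all $i$, so $Q \equiv 0 \bmod m$. At any other $z \in \{0,1\}^n$, there is at least one index $i$ with $Q_i(z) \equiv 1 \bmod p_i^{\alpha_i}$ (by the single-prime-power property, since $Q_i$ is nonzero mod $p_i^{\alpha_i}$ there and only takes values $0$ or $1$ mod $p_i^{\alpha_i}$), whence $Q(z) \equiv e_i \not\equiv 0 \bmod p_i^{\alpha_i}$, so $Q(z) \not\equiv 0 \bmod m$. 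Finally, the last claim — that $Q(z) \in \{0,1\} \bmod p_i^{\alpha_i}$ for every $z$ and every $i$ — follows because modulo $p_i^{\alpha_i}$ we have $Q \equiv Q_i$, and $Q_i$ was built to land in $\{0,1\} \bmod p_i^{\alpha_i}$.

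The main obstacle is the first step: establishing, with explicit constants, that the univariate ``weak OR'' polynomial $f_i$ modulo a single prime power $p_i^{\alpha_i}$ exists with degree $O(n^{1/r})$ and with image contained in $\{0,1\}$ modulo that prime power. This requires the precise $p$-adic valuation bounds on binomial coefficients (Kummer's / Legendre's formula) to control both the degree and the range over which the polynomial stays in $\{0,1\}$; getting the exponent to be exactly $1/r$ (rather than, say, $1/r$ only asymptotically, or with a worse constant in the big-$O$) is where the careful bookkeeping lives. Everything after that — the CRT assembly, the degree bound, and the three value conditions — is routine. I would therefore structure the write-up as: (i) a lemma on the univariate construction mod $p^{\alpha}$ with the stated degree, citing and adapting the symmetric-polynomial-representation machinery; (ii) the substitution $z \mapsto \sum z_j$ to get $Q_i$; (iii) the CRT gluing and verification of the four bulleted properties. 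I expect (i) to occupy most of the proof.
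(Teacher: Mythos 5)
Your CRT skeleton is the right shape, but the first step asks for an object that provably does not exist, and that step is where you placed all the content. You want, for a single prime power $q=p_i^{\alpha_i}$, a univariate $f_i$ of degree $O(n^{1/r})$ with $f_i(t)\equiv 1 \pmod q$ for every $t\in\{0,1,\dots,n-1\}$ and $f_i(n)\equiv 0\pmod q$. No such polynomial of sublinear degree exists: if $f$ has degree $d$, then $\sum_{j=0}^{d+1}(-1)^j\binom{d+1}{j}f(n-j)=0$ identically over $\mathbb{Z}$; if $d+1\le n$, then all arguments $n-j$ with $1\le j\le d+1$ lie in $\{0,\dots,n-1\}$, so reducing mod $p_i$ and using $\sum_{j=1}^{d+1}(-1)^j\binom{d+1}{j}=-1$ gives $f(n)\equiv 1\pmod{p_i}$, contradicting $f(n)\equiv 0$. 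Hence $d\ge n$, whatever ``range parameter'' you choose. Relatedly, the Barrington--Beigel--Rudich degree bound $O(n^{1/r})$ is a statement about a composite modulus with $r$ \emph{distinct} primes, not about a single prime power; the savings cannot be obtained prime-by-prime and then aggregated by CRT, because it exists only when the $r$ primes share the work. (Note also that the paper does not prove Lemma~\ref{cor2}; it imports it from \cite{Gro[00]}, which in turn builds on Barrington--Beigel--Rudich.)

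The correct architecture keeps your gluing but weakens what each $Q_i$ certifies. Let $S=\sum_{k}z_k$ and choose $t_i$ with $p_i^{t_i}=\Theta(n^{1/r})$ and $\prod_{i=1}^r p_i^{t_i}>n$. By Lucas/Kummer, the binomial coefficients $\binom{S}{p_i^{j}}$ for $j<t_i$, each a polynomial of degree $p_i^{j}=O(n^{1/r})$ in $z$, determine $S \bmod p_i^{t_i}$ modulo $p_i$; from them one builds $Q_i$ of degree $O(n^{1/r})$, with values in $\{0,1\}\bmod p_i^{\alpha_i}$ (via the standard lift of a mod-$p_i$ indicator to the prime power), such that $Q_i\equiv 0 \pmod{p_i^{\alpha_i}}$ exactly when $S\equiv n\pmod{p_i^{t_i}}$. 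So each $Q_i$ vanishes at many points besides the all-ones vector, and your verification ``for $z\neq(1,\dots,1)$ some $Q_i(z)\equiv 1$'' is not available by fiat. Instead, with $Q=\sum_i e_iQ_i$ as you define it, $Q(z)\equiv 0\pmod m$ forces $S\equiv n\pmod{p_i^{t_i}}$ for every $i$, hence $S\equiv n$ modulo $\prod_i p_i^{t_i}>n$, hence $S=n$, i.e., $z=(1,\dots,1)$; the $\{0,1\}$ condition mod each $p_i^{\alpha_i}$ and the degree bound $O(n^{1/r})$ then follow as in your write-up. Reworked this way, your steps (ii) and (iii) survive, but step (i) must be replaced by this digit-checking construction rather than a per-prime-power ``OR'' polynomial.
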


\begin{theorem}[\cite{Gro[00]}]\label{thm}
	Let $m$ be a positive integer, and suppose that $m$ has $r > 1$ different prime divisors: $m = \prod_{i=1}^{r} p_i^{\alpha_i}$. Then there exists $c = c(m) > 0$, such that for every integer $h > 0$, there exists an explicitly constructible uniform set-system $\mathcal{H}$ over a universe of $h$ elements such that:
	\begin{enumerate}
		\item $|\mathcal{H}| \geq \exp \left( c \dfrac{(\log h)^r}{(\log \log h)^{r-1}} \right)$,
		\item $\forall H \in \mathcal{H}:|H| = 0 \bmod m$,
		\item $\forall G, H \in \mathcal{H}, G \neq H:|G \cap H| \not= 0 \bmod m$.
	\end{enumerate}
\end{theorem}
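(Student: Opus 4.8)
The plan is to follow the polynomial method of Grolmusz: convert the low-degree modular representation supplied by Lemma~\ref{cor2} into a set system in which set sizes and pairwise-intersection sizes are literally read off as values of that polynomial modulo $m$, and then tune the construction's internal parameter to optimize the bound in~(1). First I would apply Lemma~\ref{cor2} with a free parameter $n$ (to be fixed only at the end, as a function of $h$) to obtain an explicit multilinear polynomial $Q$ in $n$ variables, of degree $d=O(n^{1/r})$, vanishing modulo $m$ at exactly one Boolean point and taking values in $\{0,1\}$ modulo each $p_i^{\alpha_i}$; after reducing its coefficients into $\{0,\dots,m-1\}$ I may write $Q(z)=\sum_{|I|\le d}c_I z^I$ with $0\le c_I<m$.

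Next comes the polynomial-to-set-system conversion. I would build a ground set consisting of $c_I$ distinct ``copies'' of each monomial index $I$ together with a bounded number of global dummy elements, and attach to each Boolean point $z$ the collection of all copies of the monomials $I$ that $z$ fails (those with $I\not\subseteq\operatorname{supp}(z)$); then the set attached to $z$ has size congruent to $|U|-Q(z)\bmod m$, and the sets attached to $z,z'$ meet in a number congruent to $|U|-Q(z)-Q(z')+Q(z\wedge z')\bmod m$ elements. I would index the family by indicator vectors $z=\chi_A$ with $A$ ranging over the balanced subsets $\binom{[n]}{\lfloor n/2\rfloor}$, so that $\chi_A\wedge\chi_B=\chi_{A\cap B}$ and, because this family is an antichain, $A\neq B$ forces $A\cap B$ to be a proper subset of $A$. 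To make the system uniform with all sizes $\equiv 0\pmod m$ while keeping \emph{every} pairwise intersection $\not\equiv 0\pmod m$ at once, I would restrict to a positive-density subfamily of $\binom{[n]}{\lfloor n/2\rfloor}$ on which $Q(\chi_A)\bmod m$ takes one fixed (nonzero) value --- possible since, by the $\{0,1\}\bmod p_i^{\alpha_i}$ clause and CRT, $Q$ realizes only $O(1)$ nonzero residues, so one class keeps a $\ge 2^{-r}$ fraction --- and then choose the number of dummies so that the set size $|U|-Q(\chi_A)$ becomes $\equiv 0\pmod m$ for all surviving $A$. A short residue computation, using $m\ge 6$ and the non-vanishing clause of Lemma~\ref{cor2}, confirms that each pairwise intersection size then lands in a nonzero residue class. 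This yields conditions~(2) and~(3).

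For the counting, the surviving family has $\ge 2^{-r}\binom{n}{\lfloor n/2\rfloor}=2^{\,n-O(\log n)}$ sets over a ground set whose size $h_0$ satisfies $\log h_0=\Theta\!\big(n^{1/r}(\log n)^{1-1/r}\big)$ when the copies are packed carefully (a naive packing gives only $\log h_0=O(n^{1/r}\log n)$, hence a weaker exponent). Inverting this relation gives $n=\Theta\!\big((\log h_0)^r/(\log\log h_0)^{r-1}\big)$, and substituting back yields $|\mathcal{H}|>\exp\!\big(c\,(\log h_0)^r/(\log\log h_0)^{r-1}\big)$ for a suitable $c=c(m)>0$, which is condition~(1). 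For an arbitrary prescribed $h$, I would run the construction at the largest admissible parameter with $h_0\le h$ and either leave the surplus $h-h_0$ elements unused or absorb them into the dummy padding; since $h_0\ge h^{1-o(1)}$, conditions~(2) and~(3) are unaffected and~(1) is weakened only negligibly. Explicitness is inherited from Lemma~\ref{cor2}.

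The main obstacle I anticipate is the simultaneous enforcement of~(2) and~(3): one modular representing polynomial naturally controls only one of ``all sizes $\equiv 0$'' and ``all pairwise intersections $\not\equiv 0$'', so the substance of the proof is the encoding that reconciles them --- the complementary incidence rule, the global-dummy shift, and the density argument that pins $Q(\chi_A)\bmod m$ to a single value across the index family --- together with the secondary task of packing the ground set tightly enough that the exponent in~(1) is $(\log h)^r/(\log\log h)^{r-1}$ and not the weaker $(\log h)^r/(\log\log h)^r$.
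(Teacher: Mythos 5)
Your overall plan (use the modular-representing polynomial of Lemma~\ref{cor2} and turn its values into set sizes and intersection sizes) is the right family of ideas, and note that the paper itself does not reprove Theorem~\ref{thm} --- it cites \cite{Gro[00]} --- but its proof of Theorem~\ref{lemma} uses exactly the machinery you are trying to reconstruct, and comparing against it exposes two genuine gaps in your encoding. First, condition~(3) is not established. With your complementary incidence rule and global dummies, writing $M$ for the total ground-set size and $t$ for the residue you pin down on the index family, you get $|H_A|\equiv M-t\pmod m$ and, for $A\neq B$, $|H_A\cap H_B|\equiv M-2t+Q(\chi_{A\cap B})\pmod m$, which after your choice of dummies ($M\equiv t$) is $\equiv Q(\chi_{A\cap B})-t\pmod m$. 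Lemma~\ref{cor2} only guarantees $Q(\chi_{A\cap B})\not\equiv 0\pmod m$ and that it is one of the $2^r-1$ nonzero $\{0,1\}$-patterns under CRT; nothing prevents $Q(\chi_{A\cap B})\equiv t$ (for instance both $\equiv 1$ modulo every $p_i^{\alpha_i}$), in which case that pairwise intersection is $\equiv 0\bmod m$ and (3) fails. Your density argument pins the residue of $Q$ only on the index sets $\chi_A$, not on the intersection points $\chi_{A\cap B}$, so the ``short residue computation'' you invoke does not exist. Grolmusz avoids this entirely with a different encoding: the sets are indexed by strings $x$, the universe consists of copies of (monomial, agreement-class) pairs, and the intersection size is \emph{exactly} $\tilde Q(\delta(x_1,y_1),\dots,\delta(x_n,y_n))$, which hits the zero residue only when $x$ and $y$ agree everywhere; no residue shift or subfamily selection is needed, and uniformity ($|H_x|=\tilde Q(1,\dots,1)$ for all $x$) comes for free, whereas your surviving family is only congruent mod $m$, not uniform as the theorem requires.

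Second, the counting does not reach the stated bound. Indexing by $\binom{[n]}{\lfloor n/2\rfloor}$ gives $\log|\mathcal H|=\Theta(n)$, while your universe must contain at least one copy of every monomial of $Q$ with nonzero coefficient mod $m$; the polynomial of Lemma~\ref{cor2} has degree $d=\Theta(n^{1/r})$ and in general on the order of $\binom{n}{d}=\exp\bigl(\Theta(n^{1/r}\log n)\bigr)$ monomials (a single elementary symmetric polynomial $e_d$ already contributes $\binom{n}{d}$ of them), so $\log h=\Theta(n^{1/r}\log n)$ regardless of how the copies are ``packed.'' Your asserted $\log h_0=\Theta\bigl(n^{1/r}(\log n)^{1-1/r}\bigr)$ is precisely what would be needed to make the arithmetic close, but no mechanism is offered and it contradicts the monomial count; without it your construction yields only $|\mathcal H|\ge\exp\bigl(c(\log h)^r/(\log\log h)^{r}\bigr)$, one factor of $\log\log h$ short. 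The correct route to the exponent $(\log h)^r/(\log\log h)^{r-1}$ is not a smaller universe but a larger index family: as in the paper's proof of Theorem~\ref{lemma}, one indexes the sets by strings in $\{0,1,\dots,n-1\}^n$ (an alphabet of size $n$, giving $\log|\mathcal H|=\Theta(n\log n)$, with the number of agreement classes per monomial, not the number of monomials alone, governing $h$), and matching $\Theta(n\log n)$ against $\log h=\Theta(n^{1/r}\log n)$ produces exactly the claimed bound.
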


\subsubsection{Matching Vectors.} 
A matching vector family is a combinatorial object that is defined as:

\begin{definition}[\cite{Zeev[11]}]
	\emph{Let $S \subseteq \mathbb{Z}_m \setminus \{0\}$, and $\langle \cdot, \cdot \rangle$ denote the inner product. We say that subsets $\mathcal{U} = \{\textbf{u}_i\}_{i=1}^N$ and $\mathcal{V} = \{\textbf{v}_i\}_{i=1}^N$ of vectors in $(\mathbb{Z}_m)^h$ form an $S$-matching family if the following two conditions are satisfied: 
		\begin{itemize}
			\item $\forall i \in [N],$ it holds that: $\langle \textbf{u}_i, \textbf{v}_i \rangle = 0 \bmod m$, 
			\item $\forall i,j \in [N]$ such that $i \neq j$, it holds that: $\langle \textbf{u}_i, \textbf{v}_j \rangle \bmod m \in S$.
	\end{itemize}}
\end{definition}

The question of bounding the size of matching vector families is closely related to the well-known Extremal Set Theory problem of constructing set systems with restricted modular intersections. Matching vectors have found applications in the context of private information retrieval~\cite{Beimel[15],Beimel[12],Zeev[15],Zeev[11],Klim[09],Sergey[08],Liu[17]}, conditional disclosure of secrets~\cite{Liu[17]}, secret sharing~\cite{Liu[18]} and coding theory~\cite{Zeev[11]}. The first super-polynomial size matching vector family follows directly from the set-system constructed by Grolmusz~\cite{Gro[00]}. If each set $H$ in the set-system $\mathcal{H}$ defined by Theorem~\ref{thm} is represented by a vector $\textbf{u} \in (\mathbb{Z}_m)^h$, then it leads to the following family of $S$-matching vectors:
 
\begin{corollary}[\cite{Zeev[11]} to Theorem~\ref{thm}]
	Let $h, m > 0$, and suppose that $m = \prod_{i=1}^{r} p_i^{\alpha_i}$ has $r > 1$ different prime divisors. Then, there exists a set $S$ of size $2^{r-1}$ and a family of $S$-matching vectors \emph{$\{\textbf{u}\}$}${}^N_{i=1},$ \emph{$\textbf{u}_i$} $\in (\mathbb{Z}_m)^h$, such that, $N \geq \exp \left( c \dfrac{(\log h)^r}{(\log \log h)^{r-1}} \right)$.
\end{corollary}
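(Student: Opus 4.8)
The plan is to obtain the matching-vector family by reading it off, essentially verbatim, from the set-system handed to us by Theorem~\ref{thm}. First I would apply Theorem~\ref{thm} with the given modulus $m = \prod_{i=1}^{r} p_i^{\alpha_i}$ and the given integer $h$, producing an explicitly constructible uniform set-system $\mathcal{H} = \{H_1, \dots, H_N\}$ over a universe $U$ of size $h$, with $N = |\mathcal{H}| \geq \exp\!\left(c \dfrac{(\log h)^r}{(\log\log h)^{r-1}}\right)$, with $|H_i| \equiv 0 \bmod m$ for every $i$, and with $|H_i \cap H_j| \not\equiv 0 \bmod m$ for $i \neq j$. I would also record that the set-system of Theorem~\ref{thm} is precisely the one built from the polynomial $Q$ of Lemma~\ref{cor2}, so it additionally satisfies the finer condition $|H_i \cap H_j| \bmod p_k^{\alpha_k} \in \{0,1\}$ for all $i \neq j$ and all $k \in \{1,\dots,r\}$.

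Next, after fixing a bijection $U \leftrightarrow [h]$, I would take $\mathbf{u}_i \in \{0,1\}^h \subseteq (\mathbb{Z}_m)^h$ to be the characteristic vector of $H_i$ and set $\mathcal{U} = \mathcal{V} = \{\mathbf{u}_i\}_{i=1}^{N}$. The inner product of two characteristic vectors counts common elements, so $\langle \mathbf{u}_i, \mathbf{u}_j \rangle \equiv |H_i \cap H_j| \bmod m$. Property~2 of Theorem~\ref{thm} then gives $\langle \mathbf{u}_i, \mathbf{u}_i \rangle = |H_i| \equiv 0 \bmod m$, which is the first matching-vector requirement. For $i \neq j$, property~3 gives $\langle \mathbf{u}_i, \mathbf{u}_j \rangle \not\equiv 0 \bmod m$, while the finer condition inherited from Lemma~\ref{cor2} forces this nonzero residue to equal $0$ or $1$ modulo each $p_k^{\alpha_k}$; hence it lies in $S := \{ s \in \mathbb{Z}_m \setminus \{0\} : s \bmod p_k^{\alpha_k} \in \{0,1\}\ \text{for all}\ k \}$, which is the second requirement. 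Since $N = |\mathcal{H}|$ obeys the asserted lower bound, this already yields an $S$-matching family of the claimed size.

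The one place I expect to have to be careful is the size of $S$. The Chinese Remainder Theorem puts the residues that are $0$ or $1$ modulo each of the $r$ prime powers in bijection with $\{0,1\}^r$; discarding the all-zero pattern (which is $0 \bmod m$) gives the immediate bound $|S| \leq 2^r - 1$. Getting down to the stated $|S| = 2^{r-1}$ requires going back into the explicit polynomial of Lemma~\ref{cor2} and checking that the realizable intersection patterns cannot fill all $2^r - 1$ nonzero cells --- intuitively, that one of the $r$ CRT-coordinates of the off-diagonal inner products is pinned --- which is the same reduction used in the matching-vector constructions of~\cite{Zeev[11]}. Apart from this bookkeeping on $|S|$, the argument is a mechanical translation of Theorem~\ref{thm} into the language of inner products, so I do not anticipate any further difficulty.
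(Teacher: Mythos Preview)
Your proposal is correct and matches the paper's approach exactly: the paper's entire justification is the single sentence ``If each set $H$ in the set-system $\mathcal{H}$ defined by Theorem~\ref{thm} is represented by a vector $\textbf{u} \in (\mathbb{Z}_m)^h$, then it leads to the following family of $S$-matching vectors,'' which is precisely your characteristic-vector translation. Your caveat about $|S|$ is well placed --- the paper does not derive the $2^{r-1}$ figure either but simply cites~\cite{Zeev[11]}, and indeed its own analogous Corollary~\ref{corImp} for the new set-system states $|S| = 2^r - 1$, exactly the bound your CRT argument gives.
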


\section{Access Structure Hiding Computational Secret Sharing: Definition}\label{Sec3}
In this section, we give a formal definition of an access structure hiding computational secret sharing scheme. 
\begin{definition}\label{MainDef}
	\emph{An access structure hiding computational secret sharing scheme with respect to an access structure $\Gamma$, a set of $\ell$ polynomial-time parties $\mathcal{P} = \{P_1, \dots, P_\ell\}$, a set of secrets $\mathcal{K}$ and a security parameter $\omega$, consists of two pairs of polynomial-time algorithms, {\fontfamily{cmtt}\selectfont(HsGen, HsVer)} and {\fontfamily{cmtt}\selectfont(Share, Recon)}, where {\fontfamily{cmtt}\selectfont(Share, Recon)} are the same as defined in the definition of computational secret sharing~(see Definition~\ref{def.1}), and {\fontfamily{cmtt}\selectfont(HsGen, HsVer)} are defined as:
		\begin{itemize}
			\item {\fontfamily{cmtt}\selectfont HsGen} is a randomized algorithm that gets $\mathcal{P}$ and $\Gamma$ as inputs, and outputs $\ell$ \textit{access structure tokens} $\{\mathrm{\mho}^{(\Gamma)}_1, \dots, \mathrm{\mho}^{(\Gamma)}_\ell\},$ 
			\item {\fontfamily{cmtt}\selectfont HsVer} is a deterministic algorithm that gets as input the \textit{access structure tokens} of a subset $\mathcal{A} \subseteq \mathcal{P}$, denoted by $\{\mathrm{\mho}_i^{(\Gamma)}\}_{i \in \mathcal{A}}$, and outputs $b \in \{0,1\}$,
		\end{itemize}
		such that, the following three requirements are satisfied:
		\begin{enumerate}
			\item \textit{Perfect Completeness:} every authorized subset of parties $\mathcal{A} \in \Gamma$ can identify itself to be a member of the access structure $\Gamma$, i.e., formally, it holds that: $Pr[${\fontfamily{cmtt}\selectfont HsVer}$(\{\mathrm{\mho}_i^{(\Gamma)}\}_{i \in \mathcal{A}}) = 1] = 1,$
			\item \textit{Perfect Soundness:} every unauthorized subset of parties $\mathcal{B} \notin \Gamma$ can identify itself to be outside of the access structure $\Gamma$, i.e., formally, it holds that: $Pr[${\fontfamily{cmtt}\selectfont HsVer}$(\{\mathrm{\mho}_i^{(\Gamma)}\}_{i \in \mathcal{B}}) = 0] = 1,$
			\item \textit{Computational Hiding:} for all access structures $\Gamma, \Gamma' \subseteq 2^{\mathcal{P}}$, where $\Gamma \neq \Gamma'$, and each subset of parties $\mathcal{B} \notin \Gamma, \Gamma'$ that is unauthorized in both $\Gamma$ and $\Gamma'$, it holds that:
			\[\left| Pr[\Gamma~|~ \{\mathrm{\mho}_i^{(\Gamma)}\}_{i \in \mathcal{B}}, \{\mathrm{\Pi}_i^{(k)}\}_{i \in \mathcal{B}}] - Pr[\Gamma'~|~ \{\mathrm{\mho}_i^{(\Gamma)}\}_{i \in \mathcal{B}}, \{\mathrm{\Pi}_i^{(k)}\}_{i \in \mathcal{B}}] \right| \leq \epsilon(\omega),\]
			where $\epsilon$ is a negligible function and $\{\mathrm{\Pi}_i^{(k)}\}_{i \in \mathcal{B}}$ denotes the subset of shares of a secret $k$, that belong to the parties in $\mathcal{B}$, and are generated by {\fontfamily{cmtt}\selectfont Share} with respect to the access structure $\Gamma$.
	\end{enumerate}}
\end{definition}

\section{Novel Set-Systems and Vector Families}\label{construction}
In this section, we construct our novel set-systems and vector families. The following notations are frequently used throughout this section.
\begin{itemize} 
	\item We denote the coefficient of $x^k$ in the power series for $f(x)$ by $[x^k]: f(x)$,
	\item Let $L$ be an ordered list of a finite number of different symbols, and $u \in L^e$ be a string comprised of $e \in \mathbb{N}$ different symbols from $L$. We define $\rhd$ to represent \textit{string membership}, i.e., $j \rhd u$ denotes that the string $u$ contains the $j^{th}$ symbol from the ordered list $L$.  
\end{itemize}

\subsection{Set System Construction}
In this section, we provide the proof for Theorem~\ref{lemma} by giving an explicit construction of the set-system $\mathcal{H}$ defined in it. Our construction is inspired by that of Grolmusz~\cite{Gro[00]}.
\begin{proof}[Theorem~\ref{lemma}]
	We use the polynomial $Q$ defined in Lemma~\ref{cor2} to construct our set-system. We begin by recalling the following property of $Q$: 
	\begin{equation}\label{eqn1}
	Q(z) = 0 \bmod m \Longleftrightarrow z_1 = z_2 = \dots = z_n = 1,
	\end{equation}
	where $z = (z_1, z_2, \dots z_n) \in \{0,1\}^n$. We know from Lemma~\ref{cor2} that $Q$ has degree $d = O(n^{1/r})$, and can be written as:
	\[Q(z_1,z_2,\dots,z_n) = \sum\limits_{i_1,i_2,\dots,i_l} a_{i_1, i_2, \dots, i_l} z_{i_1} z_{i_2} \dots z_{i_l},\]
	where $0 \leq l \leq d$, and $a_{i_1, i_2, \dots, i_l} \in \mathbb{Z}$ with $1 \leq i_1 < i_2 < \dots < i_l \leq n$. Reducing that modulo $m$, we get:
	\begin{equation}\label{eqn2.2}
	\tilde{Q}(z_1, z_2, \dots, z_n) = \sum\limits_{i_1,i_2,\dots,i_l} \tilde{a}_{i_1, i_2, \dots, i_l} z_{i_1}z_{i_2} \dots z_{i_l},
	\end{equation} 
	where $\tilde{a}_{i_1, i_2, \dots, i_l} = a_{i_1, i_2, \dots, i_l} \bmod m$. Let $L = (0,1,\dots,n-1)$ be an ordered list of $n$ symbols. Define a characteristic function $\psi: \{0, 1, \dots, n-1\}^n \rightarrow \{0,1\}^n$ as:
	\begin{equation}\label{eqn2.3}
	\begin{aligned}
	\psi(u)[j] &:= 
	\begin{cases}
	1 \qquad \qquad \qquad \text{if } j \rhd u \\
	0 \qquad \qquad \qquad \text{otherwise},
	\end{cases}
	\end{aligned}
	\end{equation}
	where $1 \leq j \leq n$ and $\psi(u)[j]$ denotes the $j^{th}$ bit of $\psi(u) \in \{0,1\}^n$. If a string $u \in \{0, 1, \dots, n - 1\}^n$, defined over the symbols in $L$, contains the $j^{th}$ symbol from the ordered list $L$, then $\psi(u)[j] = 1$, else $\psi(u)[j] = 0$. Define a comparison function $\delta(x,y): \{0,1\} \times \{0,1\} \rightarrow \{0,1\}$ as:
	\begin{equation}\label{eqn2.4}
	\delta(u, v) := \neg (u \oplus v),
	\end{equation}
	where $\neg$ and $\oplus$ denote negation and XOR, respectively. Hence, $\delta(u,v) = 1$ if $u = v$, else $\delta(u,v) = 0$. Let $\textbf{A} = (a_{x,y})$ be a $n^n \times n^n$ matrix $(x,y \in \{0, 1, \dots, n-1\}^n)$. For $x' = \psi(x)$ and $y' = \psi(y)$, define each entry $a_{x,y}$ as:
	\begin{align}\label{eqn}
	a_{x,y} = \tilde{Q}(\delta(x'_1, y'_1), \delta(x'_2, y'_2), \dots, \delta(x'_n, y'_n)) \bmod m,
	\end{align}
	where $\tilde{Q}(\cdot)$ is the polynomial defined in Equation~\ref{eqn2.2}, and $x'_j, y'_j$ denote the $j^{th}$ bit of the binary bit strings $x',y' \in \{0,1\}^n$. It follows from Equation~\ref{eqn2.3}, Equation~\ref{eqn2.4} and Equation~\ref{eqn} that if $a_{x,y} = \tilde{Q}(1,1,\dots,1) = 0 \bmod m$, then either $x = y$ or $\forall j \in [n]$ it holds that $y'_j = x'_j$, i.e., $x$ and $y$ are comprised of the same symbols. In both cases, we say that $x$ and $y$ ``cover'' each other, and denote it by $x \mathrm{\Upsilon} y$. We know from Equation~\ref{eqn2.2} that the polynomial $\tilde{Q}(z)$ can be defined as a sum of monomials $z_{i_1}z_{i_2}\dots z_{i_l}~(l\leq d)$, where each monomial $z_{i_1}z_{i_2}\dots z_{i_l}$ occurs with multiplicity $\tilde{a}_{i_1, i_2, \dots, i_l}$ in the sum. Therefore, since matrix $\textbf{A}$ is generated via $\tilde{Q}$, it follows from Equation~\ref{eqn2.2} that $\textbf{A}$ can be defined as the sum of matrices $\textbf{B}_{i_1, i_2, \dots, i_l}$, whose entries are defined as:
	\begin{equation}\label{eqn.2.5}
	b^{i_1, i_2, \dots, i_l}_{x,y} = \delta(x'_{i_1}, y'_{i_1}) \delta(x'_{i_2}, y'_{i_2}) \dots \delta(x'_{i_l}, y'_{i_l}).
	\end{equation}
	Hence, it follows from Equation~\ref{eqn2.2}, Equation~\ref{eqn} and Equation~\ref{eqn.2.5}, that $\textbf{A}$ can be written as:
	\begin{equation}\label{3.7}
		\textbf{A} = \sum\limits_{i_1,i_2,\dots,i_l} \tilde{a}_{i_1,i_2,\dots,i_l} \textbf{B}_{i_1, i_2, \dots, i_l},	
	\end{equation}
	where $\tilde{a}_{i_1,i_2,\dots,i_l}$ is the multiplicity with which the matrix $\textbf{B}_{i_1, i_2, \dots, i_l}$ occurs in the sum. Next, we analyze the matrices $\textbf{A}$ and $\textbf{B}_{i_1, i_2, \dots, i_l}$. In particular, we count the number of $0$ entries in $\textbf{A}$ and the number of $1$ entries in $\textbf{B}_{i_1, i_2, \dots, i_l}$.\\[1.5mm]	
	\textbf{Analysis of the Matrices.}
	We begin by counting the total number of entries $a_{x,y} \in \textbf{A}$ that are equal to $0$, which translates into counting the number of $x,y \in \{0,1,\dots,n-1\}^n$ such that $x \mathrm{\Upsilon} y$. 
	
	Let $\mathcal{S}$ be a set of $n$ different symbols. Let \textit{unique symbol weight} (USW) denote the number of different symbols in a string, i.e., USW$(x) = $w$(\psi(x))$, where w$(\cdot)$ denotes the Hamming weight. To form a string $x$ of length $n$ such that USW$(x) = k$, for a fixed $k \leq n$, the first step is to select $k$ distinct symbols $s_{i_1}, s_{i_2} \dots, s_{i_k}$ from $\mathcal{S}$. We know from Rosen~\cite{Rosen[10]} (Section 2.4.2), that the number of onto functions from a set of $n$ elements to a set of $k$ elements is given by $k! \Stirling{n}{k}$, where $\Stirling{n}{k}$ denotes Stirling number of the second kind (see Graham et al.~\cite{Knuth[94]}, p. 257). Hence, $k! \Stirling{n}{k}$ is the total number of strings of length $n$, that contain exactly the selected $k$-out-of-$n$ symbols: $s_{i_1}, s_{i_2} \dots, s_{i_k}$. 
	
	Let $N_k$ denote the total number of different $x \in \{0,1,\dots,n-1\}^n$ such that USW$(x) = k$. We know that for a fixed set of $k$-out-of-$n$ symbols, the number strings $x \in \{0,1,\dots,n-1\}^n$ satisfying USW$(x) = k$ is $k! \Stirling{n}{k}$. Accounting for the number of ways one can choose $k$-out-of-$n$ symbols, we get: $$N_k = \binom{n}{k} k! \Stirling{n}{k}.$$
	
	We know that for each $k$, there are $N_k$ rows in matrix $\textbf{A}$ that ``cover'' exactly $k! \Stirling{n}{k}$ entries. Hence, from Equation~\ref{eqn}, the number of $a_{x,y} = 0 \bmod m$ entries in $\textbf{A}$ is: 
	\begin{equation}\label{sahii}
	S(n) = \sum\limits_{k=1}^n N_k \cdot k! \Stirling{n}{k} = \sum\limits_{k=1}^n \dbinom{n}{k} k! \Stirling{n}{k} k! \Stirling{n}{k}.
	\end{equation}
	We recall the following well known identities involving the first-order Eulerian numbers (see Graham et al.~\cite{Knuth[94]}, p. 267) and Stirling numbers of the second kind:
	\[
	\ell ! \Stirling{n}{\ell} = \sum\limits_{k=0}^n \eulerian{n}{k} \dbinom{k}{n-\ell}; \qquad \quad (n - \ell)! \Stirling{n}{n - \ell} = \sum\limits_{k=0}^n \eulerian{n}{k} \dbinom{k}{\ell},
	\]
	where $\eulerian{n}{k}$ denotes the first-order Eulerian number, which gives the total number of permutations $\pi_1, \pi_2, \dots, \pi_n$ with $k$ \textit{ascents}, i.e., $k$ places where $\pi_t < \pi_{t+1}$. Therefore, Equation~\ref{sahii} can be rewritten as:
	\begin{align*}
	S(n) &=  \sum\limits_{k=0}^n \dbinom{n}{k} k! \Stirling{n}{k} k! \Stirling{n}{k} \\ 
	&= \sum\limits_{k=0}^n \dbinom{n}{k} k! \Stirling{n}{k} \sum\limits_{j=0}^n \eulerian{n}{j} \dbinom{j}{n-k} \\
	&= n! \sum\limits_{k=0}^n \Stirling{n}{k} \left( \sum\limits_{j=0}^n \eulerian{n}{j} \dbinom{j}{n-k} \right) \dfrac{1}{(n-k)!}. 
	\end{align*}
	Thus, the exponential generating function for $S(n)$ comes out to be:
	\[
	\sum\limits_{n \geq 0} S(n) \dfrac{x^n}{n!} = \sum\limits_{n \geq 0} \sum\limits_{k = 0}^n \Stirling{n}{k} x^k \left( \sum\limits_{j=0}^n \eulerian{n}{j} \dbinom{j}{n-k} \right) \dfrac{x^{n-k}}{(n-k)!}. 
	\]
	Recall the following definition of Touchard polynomial (Jacques Touchard~\cite{Touch[39]}):
	\[
	T_n(x) = \sum\limits_{k=0}^n \Stirling{n}{k} x^k.
	\] 
	We write $S(n)$ as: 
	\begin{equation}\label{size}
	S(n) = n![x^n]: (T_n(x)P_n(x)),
	\end{equation}
	where the second polynomial, $P_n(x)$, is defined via convolution as:
	\[
	P_n(x) = \sum\limits_{k=0}^n \left( \sum\limits_{j=0}^n \eulerian{n}{j} \dbinom{j}{k} \right) \dfrac{x^k}{k!} = \sum\limits_{k=0}^n \dfrac{(n-k)!}{k!} \Stirling{n}{n-k} x^k.
	\]
	Observe that all diagonal entries $a_{x,x}$ in matrix $\textbf{A}$ are $0$, and $\textbf{A}$ is symmetric across its diagonal. 
	\begin{lemma}\label{lemma3}
		Let the term \emph{B-entries} denote the entries $b^{i_1,i_2,\dots,i_l}_{x,y} \in \textbf{B}_{i_1,i_2,\dots,i_l}$ that are equal to $1$. Then the following holds for \emph{B-entries:} 
		\begin{enumerate}
			\item\label{1} $\forall x \in \{0,1,\dots,n-1\}^n$, each entry $a_{x,x} \in \textbf{A}$ has the same number of \emph{B-entries}, $b_{x,x}^{i_1,i_2,\dots,i_l} = 1$, and this number is divisible by $m$,
			\item\label{2} for each pair $x,y~(x,y \in \{0,1,\dots, n-1\}^n)$, the total number of \emph{B-entries}, $b_{x,y}^{i_1, i_2, \dots, i_l}= 1$, corresponding to $a_{x,y} \in \textbf{A}$, is divisible by $m$ iff $x \mathrm{\Upsilon} y$, else not.
		\end{enumerate}
	\end{lemma}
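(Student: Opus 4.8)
The plan is to reduce both statements to the defining property of the polynomial $Q$ from Lemma~\ref{cor2}. First I would fix notation: for $x' = \psi(x)$ and $y' = \psi(y)$, write $\textbf{v}_{x,y} = (\delta(x'_1,y'_1), \delta(x'_2,y'_2), \dots, \delta(x'_n,y'_n)) \in \{0,1\}^n$, and let $N(x,y)$ denote the number of \emph{B-entries} equal to $1$ at position $(x,y)$, counted over all the matrices $\textbf{B}_{i_1,i_2,\dots,i_l}$ together with the multiplicities $\tilde{a}_{i_1,i_2,\dots,i_l}$ with which they appear in the decomposition of $\textbf{A}$ in Equation~\ref{3.7}. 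By Equation~\ref{eqn.2.5}, the $(x,y)$ entry of $\textbf{B}_{i_1,i_2,\dots,i_l}$ equals $\prod_{t=1}^{l} \delta(x'_{i_t}, y'_{i_t})$, so summing over the multiset of these matrices and comparing with Equation~\ref{eqn2.2} yields $N(x,y) = \sum_{i_1,i_2,\dots,i_l} \tilde{a}_{i_1,i_2,\dots,i_l} \prod_{t=1}^{l} \delta(x'_{i_t},y'_{i_t}) = \tilde{Q}(\textbf{v}_{x,y})$, i.e.\ the value of the reduced polynomial $\tilde{Q}$ of Equation~\ref{eqn2.2} at $\textbf{v}_{x,y}$, taken as an ordinary non-negative integer (before the mod-$m$ reduction that defines $a_{x,y}$ in Equation~\ref{eqn}).

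For Condition~\ref{1}, I would use that $\delta(u,u) = 1$ for all $u \in \{0,1\}$ by Equation~\ref{eqn2.4}, so $\textbf{v}_{x,x} = (1,1,\dots,1)$ for every $x$; hence $N(x,x) = \tilde{Q}(1,1,\dots,1) = \sum_{i_1,i_2,\dots,i_l} \tilde{a}_{i_1,i_2,\dots,i_l}$, a quantity that does not depend on $x$, which gives the ``same number'' claim. For the divisibility, since $\tilde{a}_{i_1,i_2,\dots,i_l} \equiv a_{i_1,i_2,\dots,i_l} \bmod m$ by Equation~\ref{eqn2.2}, we get $\tilde{Q}(1,1,\dots,1) \equiv \sum_{i_1,i_2,\dots,i_l} a_{i_1,i_2,\dots,i_l} = Q(1,1,\dots,1) = 0 \bmod m$, where the last equality is precisely the assertion of Lemma~\ref{cor2} that $Q$ vanishes on the all-ones point. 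Therefore $m \mid N(x,x)$.

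For Condition~\ref{2}, the crucial observation is that $\textbf{v}_{x,y} = (1,1,\dots,1)$ if and only if $x'_j = y'_j$ for every $j$, i.e.\ $\psi(x) = \psi(y)$, which is exactly the relation $x \mathrm{\Upsilon} y$. Thus, if $x \mathrm{\Upsilon} y$ then $N(x,y) = \tilde{Q}(1,1,\dots,1)$, which is divisible by $m$ by Condition~\ref{1}. If instead $x \mathrm{\Upsilon} y$ fails, then $\textbf{v}_{x,y} \in \{0,1\}^n \setminus \{(1,1,\dots,1)\}$, so Lemma~\ref{cor2} gives $Q(\textbf{v}_{x,y}) \not= 0 \bmod m$; and since the coefficients of $\tilde{Q}$ and $Q$ agree modulo $m$ and $\textbf{v}_{x,y}$ is a $0/1$ vector, $N(x,y) = \tilde{Q}(\textbf{v}_{x,y}) \equiv Q(\textbf{v}_{x,y}) \bmod m$, whence $m \nmid N(x,y)$. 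This establishes both directions of the ``iff''.

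Apart from this, the argument is a direct computation, so I do not anticipate a real obstacle. The two points that need care are the bookkeeping --- ``number of B-entries'' must be counted \emph{with} the multiplicities $\tilde{a}_{i_1,i_2,\dots,i_l}$ so that it coincides with a polynomial evaluation --- and the mild subtlety that, although $\tilde{Q}$ has coefficients reduced modulo $m$, its value at $(1,1,\dots,1)$ is a genuine multiple of $m$ (not merely congruent to $0$) because $Q$ vanishes there identically.
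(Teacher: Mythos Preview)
Your proposal is correct and follows essentially the same approach as the paper's proof: both identify the (multiplicity-weighted) count of B-entries at position $(x,y)$ with the integer evaluation $\tilde{Q}(\delta(x'_1,y'_1),\dots,\delta(x'_n,y'_n))$ and then invoke Lemma~\ref{cor2}/Equation~\ref{eqn1} for the vanishing or non-vanishing modulo $m$. You are simply more explicit than the paper about the distinction between the integer $\tilde{Q}(\textbf{v}_{x,y})$ and its residue $a_{x,y}$, and about counting with multiplicities, which is a welcome clarification rather than a different argument.
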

	\begin{proof}
		We know from Equation~\ref{eqn.2.5} that except for the B-entries, all other entries in matrices $\textbf{B}_{i_1,i_2,\dots,i_l}$ are equal to $0$. Hence, it follows from Equation~\ref{3.7} that each entry $a_{x,y} \in \textbf{A}$ is simply the total number of B-entries, $b^{i_1,i_2,\dots,i_l}_{x,y} = 1$. It further follows from Equation~\ref{eqn} and Equation~\ref{eqn1} that for all $x$, we get $a_{x,x} = \tilde{Q}(1,1,\dots,1) = 0 \bmod m,$ i.e., for all $x$, the total number of B-entries, $b_{x,x}^{i_1,i_2,\dots,i_l} = 1$, is divisible by $m$. Furthermore, it follows from Equation~\ref{eqn.2.5} that because $x = x$, all entries $b_{x,x}^{i_1,i_2,\dots,i_l}$ are indeed B-entries and all cells $a_{x,x}$ have the same number of corresponding B-entries, $b_{x,x}^{i_1,i_2,\dots,i_l} = 1$. Finally, it follows from Equation~\ref{eqn} and Equation~\ref{eqn1} that for all pairs $(x,y)$, where $x \neq y$, the total number of B-entries, $b^{i_1,i_2,\dots,i_l}_{x,y} = 1$, is: $a_{x,y} = \tilde{Q}(1,1,\dots,1) = 0 \bmod m \text{ if } x \mathrm{\Upsilon} y, \text{ and } a_{x,y} \neq 0 \bmod m$ otherwise. $\qed$ 
	\end{proof}
	
	By taking all $a_{x,y} = 0 \bmod m~(\forall x,y \in \{0,1,\dots, n-1\}^n)$ entries of $\textbf{A}$ to denote sets with the corresponding B-entries, $b^{i_1,i_2,\dots,i_l}_{x,y} = 1$, as the elements in those sets leads to a set-system $\mathcal{H}$, that satisfies Conditions~\ref{T2} and~\ref{T3} of Theorem~\ref{lemma}. The number of elements, $h$, over which $\mathcal{H}$ is defined is:
	\begin{align*}
	h = \tilde{Q}(n,n,\dots,n) &= \sum\limits_{l \leq d} \sum\limits \tilde{a}_{i_1, i_2, \dots, i_l} n^l \leq (m-1) \sum\limits_{l \leq d} \dbinom{n}{l}n^l
	\\	&< (m-1)\sum\limits_{l \leq d}n^{2l}/l! < 2(m-1)n^{2d}/d!,
	\end{align*}
	assuming $n \geq 2d$. Since $d > 2$, we get: $n > 4$. From Equation~\ref{size}, it is easy to verify that the following holds for $n > 2$: 
	\begin{equation}\label{size1}
		|\mathcal{H}| = S(n) > n^{1.5n}.	
	\end{equation}
	We know from~\cite{Gro[00]} that for $r > 1, m = \prod_{i=1}^r p_i^{\alpha_i}, d = O(n^{1/r}), c = c(m) > 0$ and $h < 2(m-1)n^{2d}/d!$, the following relation holds: \[n^n \geq \exp\left(c \dfrac{(\log h)^r}{(\log \log h)^{r-1}}\right).\] 
	Therefore, the following can be derived from Equation~\ref{size1} and elementary estimations for binomial coefficients:
	\[|\mathcal{H}| > \exp\left(c \dfrac{1.5 (\log h)^r}{(\log \log h)^{r-1}}\right).\]
	A tighter bound can be derived by using Lambert $\mathcal{W}$ function~\cite{LambertOrg[58]} and the results from Corless et al.~\cite{Corless[96]} on the principal branch of Lambert $\mathcal{W}$ function, but the bound derived above suffices for our purpose. Since $m \geq 6$ and $r \geq 2$, the size of our set-system $\mathcal{H}$ is strictly greater than $\exp\left(c \dfrac{1.5 (\log h)^2}{\log \log h}\right)$. Condition 4 of Theorem~\ref{lemma} follows directly from Lemma~\ref{cor2}. It is easy to verify that the total number of B-entries corresponding to each cell $(x,y)$, where $x \neq y$ and for which $a_{x,y} = 0 \bmod m$, is not same. Moreover, since all $b_{x,x}^{i_1,i_2,\dots,i_l}$ entries are indeed B-entries, it holds that $a_{x,y} < a_{x,x}$  for all $x \neq y$. Hence, the sets in $\mathcal{H}$ do not have the same size, making $\mathcal{H}$ a non-uniform set-system. This completes the proof of Theorem~\ref{lemma}. $\qed$
\end{proof} 

\subsection{Covering Vector Families}
\begin{definition}[Covering Vectors]\label{def22}
	\emph{Let $m, h > 0$ be positive integers, $S \subseteq \mathbb{Z}_m \setminus \{0\}$, and w$(\cdot)$ and $\langle \cdot, \cdot \rangle$ denote Hamming weight and inner product, respectively. We say that a subset $\mathcal{V} = \{\textbf{v}_i\}_{i=1}^N$ of vectors in $(\mathbb{Z}_m)^h$ forms an $S$-covering family of vectors if the following two conditions are satisfied: 
		\begin{itemize}
			\item $\forall i \in [N]$, it holds that: $\langle \textbf{v}_i, \textbf{v}_i \rangle = 0 \bmod m$,
			\item $\forall i,j \in [N]$, where $i \neq j$, it holds that: 
			\begin{align*}
				\langle \textbf{v}_i, \textbf{v}_j \rangle \bmod m &= 
				\begin{cases}
					0 \qquad \qquad \quad \text{if w}(\textbf{v}_i \circ \textbf{v}_j \bmod m) = 0 \bmod m \\
					\in S \qquad \quad \quad \text{otherwise},
				\end{cases}
			\end{align*}
		\end{itemize}
	where $\circ$ denotes Hadamard/Schur product (see Definition~\ref{Hada}).}
\end{definition}

Recall from Theorem~\ref{lemma} that $h, m$ are positive integers, with $m = \prod_{i=1}^r p_i^{\alpha_i}$ having $r > 1$ different prime divisors. Further, recall Condition~\ref{T4} of Theorem~\ref{lemma}, which implies that the sizes of the pairwise intersections of the sets in $\mathcal{H}$ occupy at most $2^{r}-1$ residue classes modulo $m$. If each set $H_i \in \mathcal{H}$ is represented by a representative vector $\textbf{v}_i \in (\mathbb{Z}_m)^h$, then for the resulting subset $\mathcal{V}$ of vectors in $(\mathbb{Z}_m)^h$, the following result follows from Theorem~\ref{lemma}.   
\begin{corollary}[to Theorem~\ref{lemma}]\label{corImp}
	For the set-system $\mathcal{H}$ defined in Theorem~\ref{lemma}, if each set $H_i \in \mathcal{H}$ is represented by a unique vector \emph{$\textbf{v}_i$} $\in (\mathbb{Z}_m)^h$, then for a set $S$ of size $2^{r}-1,$ the set of vectors $\mathcal{V} =$ \emph{$\{\textbf{v}_i\}$}${}^N_{i=1}$, formed by the representative vectors of all sets in $\mathcal{H}$, forms an $S$-covering family such that $N > \exp\left(c \dfrac{1.5 (\log h)^r}{(\log \log h)^{r-1}}\right)$ and $\forall i,j \in [N]$ it holds that \emph{$\langle \textbf{v}_i, \textbf{v}_j \rangle$}$ = |H_i \cap H_j|(\bmod~ m)$.
\end{corollary}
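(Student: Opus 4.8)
The plan is to observe that this corollary is essentially a translation of Theorem~\ref{lemma} from the language of sets to the language of vectors, so the work consists of matching each condition of the definition of an $S$-covering family to the corresponding condition of Theorem~\ref{lemma}. First I would fix the representation: for each $H_i \in \mathcal{H}$, let $\textbf{v}_i \in (\mathbb{Z}_m)^h$ be the characteristic vector of $H_i$, i.e., $\textbf{v}_i[t] = 1$ if the $t$-th element of the universe lies in $H_i$ and $\textbf{v}_i[t]=0$ otherwise. With this choice the two basic identities hold over $\mathbb{Z}$ (and hence mod $m$): $\langle \textbf{v}_i, \textbf{v}_j \rangle = |H_i \cap H_j|$ for all $i,j$ (this already gives the last claimed equality $\langle \textbf{v}_i, \textbf{v}_j\rangle = |H_i\cap H_j| \bmod m$), and $\textbf{w}(\textbf{v}_i \circ \textbf{v}_j) = |H_i \cap H_j|$, since $\textbf{v}_i \circ \textbf{v}_j$ is exactly the characteristic vector of $H_i \cap H_j$.

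Next I would verify the two defining conditions of an $S$-covering family. For the self inner product: $\langle \textbf{v}_i, \textbf{v}_i\rangle = |H_i \cap H_i| = |H_i| = 0 \bmod m$ by Condition~\ref{T2} of Theorem~\ref{lemma}. For the cross terms with $i \neq j$: by the identity above, the guard $\textbf{w}(\textbf{v}_i \circ \textbf{v}_j \bmod m) = 0 \bmod m$ is equivalent to $|H_i \cap H_j| = 0 \bmod m$, which by Condition~\ref{T3} happens precisely when one of $H_i, H_j$ is contained in the other; in that case $\langle \textbf{v}_i, \textbf{v}_j\rangle = |H_i\cap H_j| = 0 \bmod m$, matching the first branch. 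Otherwise $|H_i\cap H_j|\not=0\bmod m$, and I must show $\langle\textbf{v}_i,\textbf{v}_j\rangle \bmod m$ lands in a fixed set $S$ of size $2^r-1$. This is where Condition~\ref{T4} enters: it states that for each prime power $p_i^{\alpha_i}$ dividing $m$, $|H_i\cap H_j| \in \{0,1\} \bmod p_i^{\alpha_i}$. By the Chinese Remainder Theorem, the residue of $|H_i\cap H_j|$ modulo $m$ is therefore determined by a choice of one bit per prime divisor, giving at most $2^r$ possible residues; excluding the all-zero choice (which is the $0 \bmod m$ case already handled) leaves at most $2^r - 1$ nonzero residues. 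Taking $S$ to be exactly this set of attainable nonzero residue classes gives $|S| \le 2^r - 1$, and one pads it up to size $2^r-1$ if needed.

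Finally, the size bound $N > \exp\!\left(c\,\dfrac{1.5(\log h)^r}{(\log\log h)^{r-1}}\right)$ is immediate: since the vectors $\textbf{v}_i$ are in bijection with the sets $H_i \in \mathcal{H}$, we have $N = |\mathcal{H}|$, and the bound is exactly Condition~\ref{T1} of Theorem~\ref{lemma}. I do not expect any serious obstacle here; the only point requiring a little care is making sure the ``Hadamard-weight guard'' in Definition~\ref{def22} is genuinely equivalent to the modular-intersection condition of Theorem~\ref{lemma}, i.e., that $\textbf{w}(\textbf{v}_i\circ\textbf{v}_j \bmod m) = 0 \bmod m \Leftrightarrow |H_i\cap H_j| = 0 \bmod m$ under the $0/1$ representation — which holds because the Hadamard product of two $0/1$ vectors is again a $0/1$ vector whose Hamming weight equals the size of the intersection.
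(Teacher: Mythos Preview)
Your proposal is correct and follows the same approach as the paper, which does not give a formal proof but merely states (in the lead-in paragraph) that Condition~\ref{T4} forces the pairwise intersection sizes into at most $2^r-1$ residue classes modulo $m$ and then asserts that the corollary ``follows from Theorem~\ref{lemma}'' via representative vectors. You have simply made explicit what the paper leaves implicit: the characteristic-vector representation, the identities $\langle \textbf{v}_i,\textbf{v}_j\rangle = |H_i\cap H_j|$ and $\mathrm{w}(\textbf{v}_i\circ\textbf{v}_j)=|H_i\cap H_j|$, the CRT count of residue classes, and the bijection giving the size bound.
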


\section{Our Scheme}\label{Sec4}
In Section~\ref{5.1}, we introduce an algorithm to encode and identify hidden access structures, that remain unknown unless some authorized subset of parties collaborate. Followed by that, in Section~\ref{Full}, we extend that algorithm into an access structure hiding computational secret sharing scheme. We assume semi-honest polynomial-time parties, which try to gain additional information while correctly following the protocols. The following notations are frequently used from hereon.
\begin{itemize}
	\item If each party $P_i$ holds a value $x_i$, then for any subset of parties $\mathcal{A}$, $\{x_i\}_{i \in \mathcal{A}}$ denotes the set of all $x_i$ values that belong to the parties $P_i \in \mathcal{A}$,
	\item $\prod_{i \in \mathcal{A}} x_i$ and $\sum_{i \in \mathcal{A}} x_i$ respectively denote the product and sum of all values from the set $\{x_i\}_{i \in \mathcal{A}}$,
	\item large prime: refers to a prime number of size equal to or greater than the minimum size recommended by NIST for primes~\cite{NIST[18]}.
\end{itemize}

\subsection{Access Structure Encoding Scheme (ASES)}\label{5.1}
In this section, we describe our scheme to encode and identify hidden access structures. Let $\mathcal{P} = \{P_1, \dots, P_\ell\}$ be a set of $\ell$ polynomial-time parties and $\mathrm{\Omega} \in \Gamma_0$ be any minimal authorized subset (see Definition~\ref{GammaDef}). Hence, each party $P_i \in \mathcal{P}$ can be identified as $P_i \in \mathrm{\Omega}$ or $P_i \in \mathcal{P} \setminus \mathrm{\Omega}$. \\[2mm]
\textbf{Setup.} The scheme is initialized as follows:
\begin{enumerate}
	\item For $\eta \geq \ell$, generate a set of distinct large primes, $\{p_1, p_2, \dots, p_{\eta}\}$. Generate a prime $q = u \prod_{i=1}^\eta p_i + 1$, where $u$ is an integer. We know from Dirichlet's Theorem (see Definition~\ref{Dr}) that there are infinitely many such primes $q$. Generating $q$ in this manner ensures hardness of the discrete log problem in $\mathbb{Z}_q$~\cite{Ant[14]} which, by extension, translates into hardness of the Generalized Diffie-Hellman assumption in $\mathbb{Z}_q$.	
	\item Let $w = \prod_{i=1}^\eta p_i$ and $m = \varphi(q)$. Then, it follows from $q = u \prod_{i=1}^\eta p_i + 1$ that $w | \varphi(q)$, where $\varphi$ denotes Euler's totient function (see Definition~\ref{Euler}). Hence, the following holds for $d \geq 1$ primes $\beta_d$ and positive integers $\alpha_d$: \[m = w \cdot \prod\limits_{d \geq 1} \beta_d^{\alpha_d} = \prod\limits_{i=1}^\eta p_i \cdot \prod\limits_{d \geq 1} \beta_d^{\alpha_d}.\]  
	Let $r = d + \eta > \ell$ denote the total number of prime factors of $m$. 
	\item Construct a set-system $\mathcal{H}$ modulo $m$ (as defined by Theorem~\ref{lemma}). Let $\mathcal{V} \in (\mathbb{Z}_m)^h$ denote the covering vectors family (as defined by Corollary~\ref{corImp}) representing $\mathcal{H}$ such that each vector $\textbf{v}_i \in \mathcal{V}$ represents a unique set $H_i \in \mathcal{H}$.
	\item Randomly sample $H \in \mathcal{H}$. Let $\textbf{v} \in \mathcal{V}$ be the representative vector for $H$. We call $\textbf{v}$ and $H$ the \textit{access structure vector} and \textit{access structure set}, respectively.
\end{enumerate}
\textbf{Distributing Access Structures.} Following procedure ``encodes'' the access structure $\Gamma$ that originates from $\mathrm{\Omega}$, and outputs $\ell$ \textit{access structure tokens}. 
\begin{enumerate}
	\item For each party $P_i \in \mathrm{\Omega}$, randomly select a unique vector $\textbf{v}_i \xleftarrow{\: \$ \:} \mathcal{V}$, such that, $\langle \textbf{v}, \textbf{v}_i \rangle \neq 0 \bmod m$ (i.e., $H \not\subseteq H_i$ and $H_i \not\subseteq H)$ and $\textbf{v} = \sum_{i \in \mathrm{\Omega}} \textbf{v}_i \bmod m$. Compute the identifier for party $P_i$ as: $x_i = \langle \textbf{v}, \textbf{v}_i \rangle \bmod m$. 
	\item \label{St2} For each party $P_e \in \mathcal{P} \setminus \mathrm{\Omega}$, select a unique \textit{covering party} $P_i \in \mathrm{\Omega}$. Let $H_i \in \mathcal{H}$ be the set represented by $P_i$'s covering vector, $\textbf{v}_i \in \mathcal{V}$. Randomly sample $H_j \in \mathcal{H}$, such that, $H_i \subset H_j$. Let $\textbf{v}_j \in \mathcal{V}$ be the covering vector representing $H_j$. 
	\item Compute $\textbf{v}_e \in \mathcal{V}$ such that: $\textbf{v}_e + \textbf{v}_i = \textbf{v}_j \bmod m$. Verify that $\langle \textbf{v}, \textbf{v}_e \rangle \neq 0 \bmod m$, which translates into $H \not\subseteq H_e, H_e \not\subseteq H$, for $H_e \in \mathcal{H}$ represented by $\textbf{v}_e$. If these requirements do not hold, go back to Step~\ref{St2}. 
	\item Compute the identifier for party $P_e$ as: $x_e = \langle \textbf{v}, \textbf{v}_e \rangle \bmod m$. Generating identifiers in this manner for parties $P_e \in \mathcal{P} \setminus \mathrm{\Omega}$ ensures that they are ``covered'' by the identifiers of parties in $\mathrm{\Omega}$. Since each party $P_i \in \mathrm{\Omega}$ can ``cover'' at most one party $P_e \in \mathcal{P} \setminus \mathrm{\Omega}$, our scheme requires that $|\mathcal{P}| \leq 2 \cdot |\mathrm{\Omega}|$.
	\item Each party $P_z \in \mathcal{P}$ receives an \textit{access structure token} $t^{(\Gamma)}_z = \mu^{x_z} \bmod q$, where $\mu \xleftarrow{\; \$ \;} \mathbb{Z}^*_q \setminus \{1\}$.
\end{enumerate}
In case of an identifier collision, i.e., $x_i = x_j$, where $x_j$ is the identifier of another party $P_j \in \mathcal{P}$, re-generate the identifier for either $P_i$ or $P_j$. Recall from Corollary~\ref{corImp} that $\langle \textbf{v}, \textbf{v}_i \rangle$ occupies $\leq 2^{r}-1$ residue classes modulo $m$. Therefore, the probability of an identifier collision is $\approx 1/(2^r-1)^2 < 1/(2^\ell-1)^2$, which may be non-negligible. Since our scheme works with minimal authorized subsets $\mathrm{\Omega}$ such that $|\mathrm{\Omega}| \geq \lceil \ell/2 \rceil$, it supports $2^{\binom{\ell}{\ell/2+1}}$ out of the $2^{2^{\ell - O(\log \ell)}}$ total monotone access structures over a set of size $\ell$.\\[2mm]
\textbf{Access Structure Identification.} Theorem~\ref{Th} proves that any authorized subset of parties $\mathcal{A} \in \Gamma$ can use its set of access structure tokens, $\{t^{(\Gamma)}_i\}_{i \in \mathcal{A}}$, to identify itself as a member of the access structure $\Gamma$.
\begin{theorem}\label{Th}
	Every authorized subset $\mathcal{A} \in \Gamma$ can identify itself as a member of the access structure $\Gamma$ by verifying that: $\prod_{i \in \mathcal{A}} t^{(\Gamma)}_i = 1 \bmod q$.	
\end{theorem}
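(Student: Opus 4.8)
The plan is to turn the claimed multiplicative identity in $\mathbb{Z}_q$ into an additive congruence modulo $m=\varphi(q)$ among the identifiers, and then close with Euler's theorem. Since every party $P_z$ holds $t_z^{(\Gamma)}=\mu^{x_z}\bmod q$ for one common base $\mu\in\mathbb{Z}_q^*\setminus\{1\}$, we have $\prod_{i\in\mathcal{A}}t_i^{(\Gamma)}\equiv\mu^{\sum_{i\in\mathcal{A}}x_i}\pmod q$. As $q$ is prime we have $\gcd(\mu,q)=1$, so Euler's Theorem (Definition~\ref{Euler}) gives $\mu^{m}=\mu^{\varphi(q)}\equiv 1\pmod q$; hence it suffices to prove that $m\mid\sum_{i\in\mathcal{A}}x_i$, because then $\sum_{i\in\mathcal{A}}x_i=km$ for some integer $k\geq 0$ and $\prod_{i\in\mathcal{A}}t_i^{(\Gamma)}\equiv(\mu^{m})^{k}\equiv 1\pmod q$. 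Thus the substance of the proof is a statement purely about the identifiers $x_i$, and the group $\mathbb{Z}_q$ enters only through this one invocation of Euler's theorem.

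I would establish $m\mid\sum_{i\in\mathcal{A}}x_i$ first for a minimal authorized subset $\mathcal{A}=\mathrm{\Omega}$, where no covered parties occur. For $P_i\in\mathrm{\Omega}$ the identifier is $x_i=\langle\mathbf{v},\mathbf{v}_i\rangle\bmod m$, and by construction $\sum_{i\in\mathrm{\Omega}}\mathbf{v}_i\equiv\mathbf{v}\pmod m$ coordinatewise. Since every representative vector in $\mathcal{V}$ is a $0/1$ characteristic vector, reducing each coordinate of the integer vector $\sum_{i\in\mathrm{\Omega}}\mathbf{v}_i$ modulo $m$ before pairing with $\mathbf{v}$ changes the value only by a multiple of $m$; therefore $\sum_{i\in\mathrm{\Omega}}x_i\equiv\big\langle\mathbf{v},\sum_{i\in\mathrm{\Omega}}\mathbf{v}_i\big\rangle\equiv\langle\mathbf{v},\mathbf{v}\rangle\pmod m$. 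By Corollary~\ref{corImp}, $\langle\mathbf{v},\mathbf{v}\rangle\equiv|H\cap H|=|H|\pmod m$, and Condition~\ref{T2} of Theorem~\ref{lemma} gives $|H|\equiv 0\pmod m$. Hence $\sum_{i\in\mathrm{\Omega}}x_i\equiv 0\pmod m$, which by the first paragraph already yields $\prod_{i\in\mathrm{\Omega}}t_i^{(\Gamma)}\equiv 1\pmod q$.

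The remaining and, in my view, delicate step is to pass from $\mathrm{\Omega}$ to an arbitrary authorized $\mathcal{A}\supseteq\mathrm{\Omega}$, that is, to absorb the covered parties $P_e\in\mathcal{A}\setminus\mathrm{\Omega}$. For each such $P_e$ with covering party $P_{i(e)}\in\mathrm{\Omega}$ we have $\mathbf{v}_e+\mathbf{v}_{i(e)}\equiv\mathbf{v}_{j(e)}\pmod m$ with $H_{i(e)}\subset H_{j(e)}$; because $H_{i(e)}\subseteq H_{j(e)}$ this relation already holds over $\mathbb{Z}$ (here $\mathbf{v}_e$ is the characteristic vector of $H_{j(e)}\setminus H_{i(e)}$), so no wraparound is introduced and $x_e\equiv\langle\mathbf{v},\mathbf{v}_{j(e)}\rangle-x_{i(e)}\pmod m$. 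The intended route is to substitute these relations into $\sum_{i\in\mathcal{A}}x_i\equiv\big\langle\mathbf{v},\sum_{i\in\mathcal{A}}\mathbf{v}_i\big\rangle\pmod m$, using that the covering parties $i(e)$ are pairwise distinct (each party of $\mathrm{\Omega}$ covers at most one party), and then to use Conditions~\ref{T3} and~\ref{T4} of Theorem~\ref{lemma}, applied to the comparable pairs $H_{i(e)}\subset H_{j(e)}$ and to the pairs that involve the access-structure set $H$, to show that the correction terms collapse back to $\langle\mathbf{v},\mathbf{v}\rangle\equiv 0\pmod m$. The main obstacle is exactly this bookkeeping: the identifiers of the covered parties are individually nonzero modulo $m$ (this is enforced during token generation), so the required cancellation must come entirely from the superset--subset constraints of $\mathcal{H}$, and one must check that the repeated passage between integer inner products and their residues modulo $m$ never injects a spurious multiple of $m$. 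Once $m\mid\sum_{i\in\mathcal{A}}x_i$ is in hand, the conclusion $\prod_{i\in\mathcal{A}}t_i^{(\Gamma)}\equiv 1\pmod q$ follows immediately from the first paragraph.
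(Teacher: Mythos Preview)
Your reduction to showing $m\mid\sum_{i\in\mathcal{A}}x_i$ via the linearity step $\sum_i x_i\equiv\bigl\langle\mathbf{v},\sum_i\mathbf{v}_i\bigr\rangle\pmod m$, followed by Euler's theorem, is exactly the paper's route, and your treatment of the base case $\mathcal{A}=\mathrm{\Omega}$ is correct and matches the paper (there $\sum_{i\in\mathrm{\Omega}}\mathbf{v}_i\equiv\mathbf{v}$, so the inner product is $\langle\mathbf{v},\mathbf{v}\rangle=|H|\equiv 0$).

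The gap is your third paragraph: you describe an ``intended route'' for general $\mathcal{A}\supsetneq\mathrm{\Omega}$, explicitly flag a ``main obstacle,'' and stop short of carrying out the computation. That is not a proof. More importantly, the paper does not attempt the identifier-by-identifier bookkeeping you propose; it bypasses it with a single uniform observation: for every authorized $\mathcal{A}\in\Gamma$ the vector $\mathbf{v}_\mathcal{A}=\sum_{i\in\mathcal{A}}\mathbf{v}_i$ represents a set $H_\mathcal{A}\in\mathcal{H}$ satisfying $H\subseteq H_\mathcal{A}$, and then Condition~\ref{T3} of Theorem~\ref{lemma} together with Corollary~\ref{corImp} gives $\langle\mathbf{v},\mathbf{v}_\mathcal{A}\rangle=|H\cap H_\mathcal{A}|\equiv 0\pmod m$ directly. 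In other words, the ``collapse back to $\langle\mathbf{v},\mathbf{v}\rangle$'' you are looking for is not produced by cancellation among the individual $x_e$; it is the containment $H\subseteq H_\mathcal{A}$ that carries the argument. Your plan to invoke Conditions~\ref{T3} and~\ref{T4} only on the comparable pairs $H_{i(e)}\subset H_{j(e)}$ and on pairs involving $H$ does not by itself force $\sum_{e\in\mathcal{A}\setminus\mathrm{\Omega}}x_e\equiv 0\pmod m$ --- each $x_e\not\equiv 0$ is deliberately enforced by the dealer, and those conditions say nothing about sums of several such terms --- so without the superset step the argument cannot be closed along the lines you sketch.
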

\begin{proof}
	Recall that for any authorized subset $\mathcal{A} \in \Gamma$, it holds that the set $H_\mathcal{A} \in \mathcal{H}$, represented by $\sum_{i \in \mathcal{A}} \textbf{v}_i = \textbf{v}_\mathcal{A}$, is a superset of the \textit{access structure set} $H \in \mathcal{H}$, i.e., $H \subseteq H_\mathcal{A}$. Hence, from Theorem~\ref{lemma} and Corollary~\ref{corImp}, it follows that: $\langle \textbf{v}, \textbf{v}_\mathcal{A} \rangle = 0 \bmod m = y \cdot m = y \cdot \varphi(q)$, where $y$ is a positive integer. This translates into $\mu^{\langle \textbf{v}, \textbf{v}_\mathcal{A} \rangle} = 1 \bmod q$ (using Euler's theorem). Hence, the following holds for all authorized subsets $\mathcal{A} \in \Gamma$:
	\[\prod\limits_{i \in \mathcal{A}} t^{(\Gamma)}_i = \prod\limits_{i \in \mathcal{A}} \mu^{x_i} = \mu ^{\left\langle \textbf{v}, \sum\limits_{i \in \mathcal{A}} \textbf{v}_i \right\rangle} = \mu^{\langle \textbf{v}, \textbf{v}_\mathcal{A} \rangle} = \mu^{y \cdot \varphi(q)} = 1 \bmod q. \eqno \qed\]
\end{proof}
\textbf{Perfect Soundness and Computational Hiding.}
\begin{theorem}\label{profL}
	Every unauthorized subset $\mathcal{B} \notin \Gamma$ can identify itself to be outside $\Gamma$ by using its set of access structure tokens, $\{t^{(\Gamma)}_i\}_{i \in \mathcal{B}}$, to verify that: $\prod_{i \in \mathcal{B}} t^{(\Gamma)}_i \neq 1 \bmod q$. Given that the Generalized Diffie-Hellman problem is hard, the following holds for all unauthorized subsets $\mathcal{B} \notin \Gamma$ and all access structures $\Gamma' \subseteq 2^{\mathcal{P}}$, where $\Gamma \neq \Gamma'$ and $\mathcal{B} \notin \Gamma'$:  
	\[\Big|Pr[\Gamma~|~ \{t^{(\Gamma)}_i\}_{i \in \mathcal{B}}] - Pr[\Gamma'~|~ \{t^{(\Gamma)}_i\}_{i \in \mathcal{B}}]\Big| \leq \epsilon(\omega), \]
	where $\omega = |\mathcal{P} \setminus \mathcal{B}|$ is the security parameter and $\epsilon$ is a negligible function.
\end{theorem}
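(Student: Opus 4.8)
The statement splits into two essentially independent parts — perfect soundness of the verification test and computational hiding — and I would prove them in that order.

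For soundness I would run the argument of Theorem~\ref{Th} in the contrapositive. Writing $\textbf{v}_\mathcal{B} := \sum_{i \in \mathcal{B}} \textbf{v}_i$ and letting $H_\mathcal{B} \in \mathcal{H}$ be the set it represents (as in the proof of Theorem~\ref{Th}), bilinearity of the inner product gives $\sum_{i \in \mathcal{B}} x_i \equiv \langle \textbf{v}, \textbf{v}_\mathcal{B} \rangle = |H \cap H_\mathcal{B}| \pmod m$ by Corollary~\ref{corImp}. Since $\mathcal{B} \notin \Gamma$, the encoding guarantees $H \not\subseteq H_\mathcal{B}$; moreover each $\textbf{v}_i$ is picked with $\langle \textbf{v}, \textbf{v}_i \rangle \neq 0 \bmod m$, i.e.\ $H \not\subseteq H_i$ and $H_i \not\subseteq H$, and because the representative vectors are $\{0,1\}$-valued this support condition propagates to the sum, ruling out $H_\mathcal{B} \subsetneq H$ as well. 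Condition~\ref{T3} of Theorem~\ref{lemma} then forces $|H \cap H_\mathcal{B}| \not\equiv 0 \pmod m$, i.e.\ $\langle \textbf{v}, \textbf{v}_\mathcal{B} \rangle \bmod m \in S$. As $m = \varphi(q)$, Euler's Theorem yields $\prod_{i \in \mathcal{B}} t_i^{(\Gamma)} = \mu^{\sum_{i \in \mathcal{B}} x_i} \neq 1 \bmod q$, which is the claimed test; the one point needing care is that $\mu$ must have order divisible by every prime factor of $m$, which I would either impose when sampling $\mu$ or note holds with overwhelming probability because the prime divisors $p_i$ of $m$ are large.

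For computational hiding I would reduce to the GDH assumption (Definition~\ref{GDH}). Suppose a PPT algorithm $\mathcal{D}$ distinguishes, from $\{t_i^{(\Gamma)}\}_{i \in \mathcal{B}}$ and the shares $\{\Pi_i^{(k)}\}_{i \in \mathcal{B}}$ (simulatable via the computational secrecy of the \texttt{Share}/\texttt{Recon} part) together with the public $q,\mu$, whether the encoded structure is $\Gamma$ or $\Gamma'$, with advantage non-negligible in $\omega = |\mathcal{P} \setminus \mathcal{B}|$. The first step is to observe that $\Gamma$ and $\Gamma'$ agree on all of $2^\mathcal{B}$ (monotonicity plus $\mathcal{B} \notin \Gamma, \Gamma'$), so any advantage must come from information about some coalition $\mathcal{A}$ with $\mathcal{A} \setminus \mathcal{B} \neq \emptyset$; by Theorem~\ref{Th} and the soundness part, membership of such an $\mathcal{A}$ is the event $\prod_{i \in \mathcal{A} \setminus \mathcal{B}} t_i = \big(\prod_{i \in \mathcal{A} \cap \mathcal{B}} t_i\big)^{-1}$, so $\mathcal{D}$ is in effect predicting a bit about the group elements $\{\prod_{i \in T} t_i : \emptyset \neq T \subseteq \mathcal{P} \setminus \mathcal{B}\}$ that are absent from its view. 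The reduction then plants a GDH challenge over the $\omega$ parties outside $\mathcal{B}$, simulates an honest \texttt{ASES} execution in which the $\omega$ hidden tokens are tied to that challenge, answers all combinations of hidden tokens that $\mathcal{D}$ may reconstruct — exactly those coalitions that stay unauthorized under both $\Gamma$ and $\Gamma'$ — using the GDH partial-product oracle, and turns a successful distinction into a computation of the remaining target element, contradicting GDH. The instance has $\omega$ exponents, which is why $\omega = |\mathcal{P} \setminus \mathcal{B}|$ is the right security parameter.

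The main obstacle is making this reduction rigorous, and in particular reconciling the two structures at play: tokens combine \emph{additively} in the exponent ($\prod_{i \in T} t_i = \mu^{\sum_{i \in T} x_i}$), whereas GDH is phrased \emph{multiplicatively} in the exponent. Bridging this calls for either a change of variables in the exponent or a hybrid that peels off the $\omega$ outside parties one at a time, and — more delicately — a simulator that presents $\mathcal{D}$ with a set-system $\mathcal{H}$, vector family $\mathcal{V}$, access-structure set $H$, and identifiers $\{x_i\}_{i \in \mathcal{B}}$ distributed exactly as in a real execution for \emph{both} $\Gamma$ and $\Gamma'$, so that the sole computational gap between the two worlds is precisely the GDH target. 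Verifying this faithfulness, and that the oracle suffices for every query the simulator must answer, is where the real work lies; the remainder is bookkeeping with Euler's Theorem, Corollary~\ref{corImp}, and the negligibility of $\epsilon$.
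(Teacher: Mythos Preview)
Your soundness argument matches the paper's almost exactly: it too sets $\textbf{v}_\mathcal{B} = \sum_{i \in \mathcal{B}} \textbf{v}_i$, invokes the ASES construction to rule out $H \subseteq H_\mathcal{B}$ and $H_\mathcal{B} \subseteq H$, appeals to Condition~\ref{T3} of Theorem~\ref{lemma} and Corollary~\ref{corImp} to obtain $\langle \textbf{v}, \textbf{v}_\mathcal{B} \rangle \not\equiv 0 \pmod m$, and finishes with Euler's Theorem. Your added caveat about the multiplicative order of $\mu$ is a genuine subtlety the paper does not address; it simply takes $\mu \in \mathbb{Z}_q^* \setminus \{1\}$ and asserts $\mu^{\langle \textbf{v}, \textbf{v}_\mathcal{B}\rangle} \neq 1 \bmod q$.

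For computational hiding, your approach is substantially more careful than the paper's. The paper does not build a simulator or a hybrid. Instead, it fixes a generator $g$ of $\mathbb{Z}_q^*$ and argues that for any set $\mathcal{B}$ one can \emph{choose} integers $a_1,\dots,a_n$ (with $n = |\mathcal{B}|$) so that $\mu^{\sum_{i \in \mathcal{B}} x_i} = g^{\prod_{i=1}^n a_i}$, and that for an authorized superset $\mathcal{A} \supsetneq \mathcal{B}$ this extends to $I_\mathcal{A} = I_\mathcal{B} \cup \{a_{n+1},\dots,a_{n'}\}$ with $\mu^{\sum_{i \in \mathcal{A}} x_i} = g^{\prod_{i=1}^{n'} a_i}$. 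It then asserts that distinguishing $\Gamma$ from $\Gamma'$ would force $\mathcal{B}$ to learn something non-negligible about $g^{\prod_{i=1}^{n'} a_i}$ from $g^{\prod_{i=1}^{n} a_i}$, which is declared to be a GDH instance. In other words, the additive/multiplicative mismatch you identify as the main obstacle is handled in the paper simply by \emph{postulating} a re-parametrization of exponents (true pointwise, since $g$ generates $\mathbb{Z}_q^*$), without checking that the resulting instance is distributed as a genuine GDH challenge or that $\mathcal{B}$'s view corresponds to the proper-subset oracle of Definition~\ref{GDH}. Your proposed reduction with an explicit simulator and use of the partial-product oracle is the more rigorous path; the paper's argument is closer to an informal sketch, so the ``real work'' you anticipate is not carried out there.
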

\begin{proof}
		It follows from the ASES procedure that for all unauthorized subsets $\mathcal{B} \notin \Gamma$, it holds that the set $H_\mathcal{B} \in \mathcal{H}$, represented by $\sum_{i \in \mathcal{B}} \textbf{v}_i = \textbf{v}_\mathcal{B}$, cannot be a superset or subset of the \textit{access structure set} $H \in \mathcal{H}$. Hence, it follows from Theorem~\ref{lemma} and Corollary~\ref{corImp} that: $\langle \textbf{v}, \textbf{v}_\mathcal{B} \rangle \neq 0 \bmod m$, which translates into the following relation by Euler's theorem (since $m = \varphi(q)$ and $\mu \xleftarrow{\; \$ \;} \mathbb{Z}^*_q \setminus \{1\}):$
		\[
		\prod_{i \in \mathcal{B}} t^{(\Gamma)}_i = \prod_{i \in \mathcal{B}} \mu^{x_i} = \mu^{\left\langle \textbf{v}, \sum\limits_{i \in \mathcal{B}} \textbf{v}_i \right\rangle} = \mu^{\langle \textbf{v}, \textbf{v}_\mathcal{B} \rangle} \neq 1 \bmod q.\]
		Hence, any unauthorized subset $\mathcal{B} \notin \Gamma$ can identify itself as not being a part of the access structure $\Gamma$ by simply multiplying its \textit{access structure tokens}, $\{t^{(\Gamma)}_i\}_{i \in \mathcal{B}}$. The security parameter $\omega = |\mathcal{P} \setminus \mathcal{B}|$ accounts for this minimum information that is available to any unauthorized subset $\mathcal{B} \notin \Gamma$.
		
		If some unauthorized subset $\mathcal{B} \notin \Gamma$ has non-negligible advantage in distinguishing access structure $\Gamma$ from any other $\Gamma' \subseteq 2^{\mathcal{P}}$, where $\Gamma \neq \Gamma'$ and $\mathcal{B} \notin \Gamma'$, then the following must hold for some non-negligible function $\chi$:
		\begin{equation}\label{proofEq}
		\left|Pr[\Gamma~|~ \{t^{(\Gamma)}_i\}_{i \in \mathcal{B}}] - Pr[\Gamma'~|~ \{t^{(\Gamma)}_i\}_{i \in \mathcal{B}}]\right| \geq \chi(\omega),
		\end{equation}
		
		Let $g \in \mathbb{Z}^*_q$ be a generator of $\mathbb{Z}^*_q$ (recall that $\mathbb{Z}_q^*$ is a cyclic group). We know that the setup procedure used to generate $q$ ensures that: $|\mathbb{Z}_q^*| = \varphi(q) \gg |\mathcal{P}|$. Hence, given that $g$ is a generator of $\mathbb{Z}_q^*$, it follows that for each identifier $x_i$, there exists some $a_i \in \mathbb{Z}$ such that: $\mu^{x_i} = g^{a_i} \bmod q.$ Therefore, by extension, it follows that for all sets $\mathcal{B}$, there exists set(s) of $n$ different integers $I_\mathcal{B} = \{a_1, \dots, a_n\}$, where $n = |\mathcal{B}|$, such that $\mu^{\sum_{i \in \mathcal{B}} x_i} = g^{\prod_{i=1}^n a_i} \bmod q$. Hence, it holds that: $\prod_{i \in \mathcal{B}} \mu^{x_i} = g^{\prod_{i=1}^n a_i} \bmod q.$
		
		We know that each unauthorized subset $\mathcal{B} \notin \Gamma$ has at least one proper superset $\mathcal{A} \supsetneq \mathcal{B}$, such that $\mathcal{A} \in \Gamma$. Since $g$ is a generator of $\mathbb{Z}^*_q$, there exists set(s) of $n'$ different integers $I_\mathcal{A} = I_\mathcal{B} \cup \{a_{n+1}, \dots, a_{n'}\}$, where $n' = |\mathcal{A}|$, such that the following holds: $
		\prod_{i \in \mathcal{A}} \mu^{x_i} = g^{\prod_{i=1}^{n'} a_i} \bmod q.$
		
		We know that in order to satisfy Equation~\ref{proofEq}, $\mathcal{B}$ must gain some non-negligible information about $g^{\prod_{i=1}^{n'} a_i}$ in $\mathbb{Z}^*_q$. We also know that $\mathcal{B}$ can compute $g^{\prod_{i=1}^{n} a_i} \bmod q$. Hence, it follows directly from Definition~\ref{GDH} that gaining any non-negligible information about $g^{\prod_{i=1}^{n'} a_i}$ from $g^{\prod_{i=1}^{n} a_i}$ in $\mathbb{Z}^*_q$ requires solving the Generalized Diffie-Hellman (GDH) problem. Therefore, Equation~\ref{proofEq} cannot hold given that the GDH assumption holds. Hence, the advantage of $\mathcal{B} \notin \Gamma$ must be negligible in the security parameter $\omega$. $\qed$
\end{proof}

\subsection{Building the Full Scheme}\label{Full} The following procedure allows an honest dealer to employ the ASES scheme and realize an access structure hiding computational secret sharing scheme.
\begin{enumerate}
	\item Perform ASES to generate \textit{access structure tokens} $t^{(\Gamma)}_z = \mu^{x_z} \bmod q$, for each party $P_z \in \mathcal{P}$.
	\item Follow Step 1 of the setup procedure of ASES to generate a suitable prime $q'$.   
	\item Generate a set-system $\mathcal{H}'$ modulo $m'$ (as defined by Theorem~\ref{lemma}), where $m' = \varphi(q')$. Let $\mathcal{V'}$ denote the covering vector family (as defined by Corollary~\ref{corImp}) that is formed by the representative vectors $\textbf{v}_i \in \mathcal{V}$ for the sets $H_i \in \mathcal{H}$. 
	\item Generate the secret that needs to be shared: $k \xleftarrow{\; \$ \;} \mathbb{Z}^*_{q'}$, and randomly sample $|\mathrm{\Omega}|$ integers, $\{b_i\}_{i=1}^{|\mathrm{\Omega}|}$, such that: $\prod_{i=1}^{|\mathrm{\Omega}|} b_i = k \bmod q'$.
	\item Generate $\gamma \xleftarrow{\; \$ \;} \mathbb{Z}^*_{q'} \setminus \{1\}$. For each party $P_j \in \mathcal{P} \setminus \mathrm{\Omega}$, employ ASES with parameters $\{m', q', \mathcal{H'}, \mathcal{V'}, \gamma\}$ to generate identifier $y_j \in \mathbb{Z}_m$, and access structure token: $s^{(k)}_j = \gamma^{y_j} \bmod q'$. Party $P_j$ receives $s^{(k)}_j$ as its share.
	\item The share for each party $P_i \in \mathrm{\Omega}$ is generated as: $s^{(k)}_i = (b_i \cdot \gamma^{y_i}) \bmod q'$. Each party $P_z \in \mathcal{P}$ receives <access structure token, share> pair: $(t^{(\Gamma)}_z, s^{(k)}_z)$.
\end{enumerate}
\textbf{Completeness, Soundness, Correctness, Secrecy and Hiding:} 
We prove that our access structure hiding computational secret sharing scheme satisfies the completeness, soundness, correctness, hiding and secrecy requirements outlined by the definition of Access Structure Hiding Computational Secret Sharing (see Definition~\ref{MainDef}). Since independent iterations of ASES are used to generate the access structure tokens and shares, perfect completeness follows directly from Theorem~\ref{Th}. Similarly, perfect soundness and computational hiding follow directly from Theorem~\ref{profL}. Hence, we move on to proving perfect correctness and computational secrecy. \\[2mm] 
\textit{Perfect Correctness:} It follows directly from Theorem~\ref{Th} that for all authorized subsets $\mathcal{A} \in \Gamma$, it holds that: $\prod_{i \in \mathcal{A}} \gamma^{y_i} = 1 \bmod q'$. Hence, any $\mathcal{A} \in \Gamma$ can reconstruct the secret, $k$, by combining its shares as:  
\[\prod\limits_{i \in \mathcal{A}} s^{(k)}_i \bmod q' = 1 \cdot \prod\limits_{y \in \mathrm{\Omega}} b_y \bmod q' = k, \eqno \text{(using }\prod\limits_{i \in \mathcal{A}} \gamma^{y_i} = 1 \bmod q').\]
\begin{theorem}
	The maximum share size of our access structure hiding secret sharing scheme for any access structure is $(1+ o(1)) \dfrac{2^{\ell+1}}{\sqrt{\pi \ell/2}}$.
\end{theorem}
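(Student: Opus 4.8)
The plan is to reduce the bound to a short combinatorial count. \textbf{First, I would pin down what a share actually is.} In the full scheme of Section~\ref{Full}, for an access structure $\Gamma$ with family of minimal authorized subsets $\Gamma_0=\{\mathrm{\Omega}_1,\dots,\mathrm{\Omega}_{|\Gamma_0|}\}$, the dealer runs one independent instance of the ASES-based construction (token generation plus secret splitting) per $\mathrm{\Omega}_j$, so every party $P_z$ ends up holding, for each $j$, one access structure token in $\mathbb{Z}^*_q$ and one share component in $\mathbb{Z}^*_{q'}$. Each of these is a single ring element whose length is determined solely by $q$ and $q'$ (hence by the security parameter), and in particular is independent of the secret $k$ and of $\Gamma$. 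Thus, counted in ring elements, the size of $P_z$'s share is exactly $2|\Gamma_0|$, and the maximum share size over the access structures supported by the scheme equals $2\cdot\max_{\Gamma}|\Gamma_0|$.

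\textbf{Second, I would bound $|\Gamma_0|$.} The scheme supports exactly those access structures whose minimal authorized subsets all have size at least $\ell/2+1$ (this is the regime in which the covering step can assign each party outside $\mathrm{\Omega}$ its own covering party inside $\mathrm{\Omega}$; as recalled in Section~\ref{5.1} this yields the family of $2^{\binom{\ell}{\ell/2+1}}$ supported structures), and $\Gamma_0$ is always an antichain in $2^{\mathcal{P}}$. By the LYM inequality, $\sum_{\mathrm{\Omega}\in\Gamma_0}\binom{\ell}{|\mathrm{\Omega}|}^{-1}\le 1$; since $\binom{\ell}{k}$ is non-increasing for $k\ge\ell/2$ and every $\mathrm{\Omega}\in\Gamma_0$ has $|\mathrm{\Omega}|\ge\ell/2+1$, each summand is at least $\binom{\ell}{\ell/2+1}^{-1}$, whence $|\Gamma_0|\le\binom{\ell}{\ell/2+1}$. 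Equality is attained by the $(\ell/2+1)$-threshold structure, whose $\Gamma_0=\binom{[\ell]}{\ell/2+1}$ consists of all $(\ell/2+1)$-subsets of $\mathcal{P}$ and is a legal supported structure. Hence $\max_{\Gamma}|\Gamma_0|=\binom{\ell}{\ell/2+1}$, and the maximum share size is $2\binom{\ell}{\ell/2+1}$.

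\textbf{Third, I would make the asymptotics explicit.} We have $\binom{\ell}{\ell/2+1}=\dfrac{\ell/2}{\ell/2+1}\binom{\ell}{\ell/2}=(1+o(1))\binom{\ell}{\ell/2}$, and Stirling's formula for the central binomial coefficient gives $\binom{\ell}{\ell/2}=(1+o(1))\dfrac{2^{\ell}}{\sqrt{\pi\ell/2}}$. Therefore the maximum share size equals $2\binom{\ell}{\ell/2+1}=(1+o(1))\dfrac{2^{\ell+1}}{\sqrt{\pi\ell/2}}$, which is exactly the claim.

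\textbf{Main obstacle.} There is no deep step here; all the care sits in the first step, namely pinning down precisely what each party receives and checking that (i) the full scheme is genuinely ``one ASES/secret-split instance per minimal authorized subset, concatenated''; (ii) each component of a share is a single ring element of fixed, secret- and $\Gamma$-independent size, so that ``share size'' is faithfully captured by the count $2|\Gamma_0|$; and (iii) no additional per-party data is emitted. Once that accounting is fixed, the extremal-set-theory bound (LYM, together with the fact that the largest admissible antichain level is $\ell/2+1$) and the Stirling estimate are both routine.
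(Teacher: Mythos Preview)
Your proposal is correct and follows essentially the same approach as the paper: both count two ring elements per minimal authorized subset per party, bound $|\Gamma_0|$ by an antichain bound, and finish with the Stirling asymptotic for the (near-)central binomial coefficient. The only difference is that the paper simply invokes Sperner's bound $\binom{\ell}{\ell/2}$ directly, whereas you argue via LYM under the constraint $|\mathrm{\Omega}|\ge \ell/2+1$ to obtain $\binom{\ell}{\ell/2+1}$ and then absorb the discrepancy into the $(1+o(1))$; both are legitimate readings of the scheme's support and yield the same asymptotic, and your treatment of achievability (via the threshold structure) is in fact more explicit than the paper's.
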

\begin{proof}
	Our access structure hiding secret sharing scheme is designed to ``encode'' minimal authorized subsets. It is easy to verify that the maximum number of unique minimal authorized subsets in any access structure is $\binom{\ell}{\ell/2}$. For each minimal authorized subset, each party $P_z \in \mathcal{P}$ receives two elements, $s^{(k)}_z$ and $t^{(\Gamma)}_z$, both of which have (almost) the same size as the secret. Hence, it follows that the maximum share size for any (supported) access structure is: 
	\begin{align*}
	\max \left(\mathrm{\Pi}^{(k)}\right) &\approx \binom{\ell}{\ell/2}2|k|\\ &= (1+ o(1)) \dfrac{2^{\ell+1}}{\sqrt{\pi \ell/2}}|k|, \qquad \text{(using results from~\cite{Das[20]})}.
	\end{align*}
	Hence, the maximum share size with respect to the secret size $|k|$ is: $(1+ o(1)) \dfrac{2^{\ell+1}}{\sqrt{\pi \ell/2}}. \qed$
\end{proof} 	

\textit{Computational Secrecy:} Since independent iterations of ASES are used to generate the sets $\{t^{(\Gamma)}_z\}_{z \in \mathcal{P}}$ and $\{s^{(k)}_z\}_{z \in \mathcal{P}}$, computational indistinguishability (w.r.t. security parameter $\omega = |\mathcal{P} \setminus \mathcal{B}|)$ of all different access structures $\Gamma, \Gamma' \subseteq 2^{\mathcal{P}}$, for all unauthorized subsets $\mathcal{B} \notin \Gamma, \Gamma'$ follows directly from Theorem~\ref{profL}, i.e., it holds that: $$\Big|Pr[\Gamma~|~ \{t^{(\Gamma)}_i\}_{i \in \mathcal{B}}, \{s^{(k)}_i\}_{i \in \mathcal{B}}] - Pr[\Gamma'~|~ \{t^{(\Gamma)}_i\}_{i \in \mathcal{B}}, \{s^{(k)}_i\}_{i \in \mathcal{B}}]\Big| \leq \epsilon(\omega).$$ 
\begin{theorem}\label{ThMM}
	Given that GDH problem is hard, it holds for every unauthorized subset $\mathcal{B} \notin \Gamma$ and all different secrets $k_1, k_2 \in \mathcal{K}$ that the distributions $\{s_i^{(k_1)}\}_{i \in \mathcal{B}}$ and $\{s_i^{(k_2)}\}_{i \in \mathcal{B}}$ are computationally indistinguishable w.r.t. the security parameter $\omega = |\mathcal{P} \setminus \mathcal{B}|$.
\end{theorem}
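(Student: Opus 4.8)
The plan is to mirror the proof of Theorem~\ref{profL} and reduce any distinguishing advantage to the GDH assumption in $\mathbb{Z}^*_{q'}$. First I would isolate which shares can depend on the secret: for every party $P_j \in \mathcal{P} \setminus \mathrm{\Omega}$ the share is $s^{(k)}_j = \gamma^{y_j} \bmod q'$, produced by an iteration of ASES that never touches $k$ or the $b_i$'s, so $\{s^{(k)}_j\}_{j \in \mathcal{B} \setminus \mathrm{\Omega}}$ is identically distributed for $k_1$ and $k_2$; the same holds for the access structure tokens $\{t^{(\Gamma)}_i\}_{i \in \mathcal{B}}$. Hence only the shares $s^{(k)}_i = b_i \gamma^{y_i}$ with $P_i \in \mathrm{\Omega} \cap \mathcal{B}$ are relevant.

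Next I would invoke minimality of $\mathrm{\Omega} \in \Gamma_0$: since $\mathcal{B} \notin \Gamma$ we cannot have $\mathrm{\Omega} \subseteq \mathcal{B}$, so $\mathrm{\Omega} \cap \mathcal{B} \subsetneq \mathrm{\Omega}$ and $\mathcal{B}$ sees only a proper subfamily of the blinded factors $\{b_i \gamma^{y_i}\}_{i \in \mathrm{\Omega}}$. Combining this with Theorem~\ref{Th}, which gives $\prod_{i \in \mathrm{\Omega}} \gamma^{y_i} = 1 \bmod q'$ because $\mathrm{\Omega}$ is authorized, I observe that $k = \prod_{i \in \mathrm{\Omega}} b_i = \prod_{i \in \mathrm{\Omega}} s^{(k)}_i \bmod q'$; thus the secret equals exactly the product over $\mathrm{\Omega}$ of the $\mathrm{\Omega}$-shares, of which $\mathcal{B}$ holds only a proper subset.

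The core step is the GDH reduction. Suppose for contradiction that some unauthorized $\mathcal{B} \notin \Gamma$ distinguishes $\{s^{(k_1)}_i\}_{i \in \mathcal{B}}$ from $\{s^{(k_2)}_i\}_{i \in \mathcal{B}}$ with non-negligible advantage $\chi(\omega)$; then $\mathcal{B}$ obtains non-negligible information about $k = \prod_{i \in \mathrm{\Omega}} s^{(k)}_i$. Fixing a generator $g$ of $\mathbb{Z}^*_{q'}$ (whose order $\varphi(q') \gg |\mathcal{P}|$ by the setup), I would, exactly as in the proof of Theorem~\ref{profL}, write each relevant share as $s^{(k)}_i = g^{a_i}$ and pass to the multiplicative encoding in which $\prod_{i \in I} s^{(k)}_i = g^{\prod_{i \in I} a_i}$ for nested integer families $I_{\mathcal{B} \cap \mathrm{\Omega}} \subseteq I_{\mathrm{\Omega}}$, so that $k = g^{\prod_{i \in \mathrm{\Omega}} a_i}$ while $\mathcal{B}$ can compute at most $g^{\prod_{i \in I} a_i}$ for proper subsets $I \subsetneq [\,|\mathrm{\Omega}|\,]$. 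Extracting non-negligible information about $g^{\prod_{i \in \mathrm{\Omega}} a_i}$ from these is precisely an instance of the GDH problem (Definition~\ref{GDH}), contradicting its hardness; hence the advantage of $\mathcal{B}$ is negligible in $\omega = |\mathcal{P} \setminus \mathcal{B}|$ and the two share distributions are computationally indistinguishable.

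I expect the delicate point to be the one already implicit in Theorem~\ref{profL}: justifying that the ``sum-of-exponents to product-of-exponents'' re-encoding can be chosen \emph{consistently}, i.e. that the integer family assigned to $\mathcal{B} \cap \mathrm{\Omega}$ extends to the one assigned to $\mathrm{\Omega}$, so the scenario genuinely matches the oracle access granted in the GDH game. A secondary point is confirming that the secret-independent material — the $\{\gamma^{y_j}\}_{j \in \mathcal{B} \setminus \mathrm{\Omega}}$ and the tokens $\{t^{(\Gamma)}_i\}_{i \in \mathcal{B}}$ — contributes nothing exploitable, which follows because these are produced by independent ASES iterations with no functional dependence on $\{b_i\}$.
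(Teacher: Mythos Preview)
Your proposal is correct and follows essentially the same route as the paper: both mirror the GDH reduction of Theorem~\ref{profL}, rewriting the relevant shares as powers of a generator of $\mathbb{Z}^*_{q'}$ and arguing that any distinguishing advantage would yield the full GDH product from proper-subset queries. The only cosmetic differences are that the paper first separates out a one-time-pad argument for the random $b_i$'s before treating the $\gamma^{y_i}$'s, and it compares $\mathcal{B}$ with a generic authorized superset $\mathcal{A}\supsetneq\mathcal{B}$ rather than with $\mathrm{\Omega}$; the sum-to-product re-encoding you flag as delicate is precisely the step the paper uses (without additional justification).
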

\begin{proof}
	Since the set $\{b_i\}_{i=1}^{|\mathrm{\Omega}|}$ is generated randomly, secrecy of the $b_i$ values follows from one-time pad. Moving on to the secrecy of $\gamma^{y_i}$ values: since $\gamma(\neq 1)$ is a random element from $\mathbb{Z}_{q'}^*$, there exists a generator $g$ of $\mathbb{Z}_{q'}^*$ (note that $\mathbb{Z}_{q'}^*$ is a cyclic group) such that for each identifier, $y_i$, generated by the ASES procedure, there exists an $a_i \in \mathbb{Z}$ such that: $\gamma^{y_i} = g^{a_i} \bmod q'$. By extension, there exists set(s) of $n$ different integers $I_\mathcal{B} = \{a_1, \dots, a_n\}$, where $n = |\mathcal{B}|$, such that: $\prod_{i \in \mathcal{B}} \gamma^{y_i} = g^{\prod_{i=1}^n a_i} \bmod q'$. We know that each unauthorized subset $\mathcal{B} \notin \Gamma$ has at least one proper superset $\mathcal{A} \supsetneq \mathcal{B}$, such that $\mathcal{A} \in \Gamma$. Since $g$ is a generator of $\mathbb{Z}^*_{q'}$, there exists set(s) of $n'$ different integers $I_\mathcal{A} = I_\mathcal{B} \cup \{a_{n+1}, \dots, a_{n'}\}$, where $n' = |\mathcal{A}|$, such that: $\prod_{i \in \mathcal{A}} \gamma^{y_i} = g^{\prod_{i=1}^{n'} a_i} \bmod q'.$ It follows from Definition~\ref{GDH} that in order to gain any non-negligible information about $g^{\prod_{i=1}^{n'} a_i}$ from $g^{\prod_{i=1}^{n} a_i}$, in $\mathbb{Z}_{q'}^*$, $\mathcal{B}$ must solve the GDH problem. Therefore, for every unauthorized subset $\mathcal{B} \notin \Gamma$, computational indistinguishability of $\{s_i^{(k_1)}\}_{i \in \mathcal{B}}$ and $\{s_i^{(k_2)}\}_{i \in \mathcal{B}}$ w.r.t. the security parameter $\omega$ follows directly from the GDH assumption.   $\qed$
\end{proof}	 
\section*{Open Problems} Our access structure hiding secret sharing scheme requires that $|\mathcal{P}| \leq 2 |\mathrm{\Omega}|$, where $\mathrm{\Omega} \in \Gamma_0$ is any minimal authorized subset. It is worth exploring whether this restriction can be further relaxed, or removed. Another interesting problem is defining and constructing set-systems and vector families that can support simultaneous encoding of multiple minimal authorized subsets.

\bibliographystyle{plainurl}
\bibliography{CO2020}

\begin{thebibliography}{10}

\bibitem{Apple[19]}
Benny Applebaum, Amos Beimel, Oriol Farr\`{a}s, Oded Nir, and Naty Peter.
\newblock Secret-sharing schemes for general and uniform access structures.
\newblock In {\em EUROCRYPT}, pages 441--471, 2019.

\bibitem{Benny[20]}
Benny Applebaum, Amos Beimel, Oded Nir, and Naty Peter.
\newblock Better secret sharing via robust conditional disclosure of secrets.
\newblock In {\em ACM SIGACT Symposium on Theory of Computing (STOC)}, pages
  280--293, 2020.

\bibitem{NIST[18]}
Elaine~B. Barker, Lidong Chen, Allen~L. Roginsky, Apostol~T. Vassilev, and
  Richard Davis.
\newblock Recommendation for pair-wise key-establishment schemes using discrete
  logarithm cryptography.
\newblock Special Publication (NIST SP) - 800-56Ar3, 2018.

\bibitem{Beimel[12]}
A.~{Beimel}, Y.~{Ishai}, E.~{Kushilevitz}, and I.~{Orlov}.
\newblock Share conversion and private information retrieval.
\newblock In {\em IEEE 27th Conference on Computational Complexity}, pages
  258--268, 2012.

\bibitem{Beimel[11]}
Amos Beimel.
\newblock Secret-sharing schemes: A survey.
\newblock {\em Coding and Cryptology, Third International Workshop, IWCC},
  pages 11--46, 2011.

\bibitem{Beimel[15]}
Amos Beimel, Yuval Ishai, Ranjit Kumaresan, and Eyal Kushilevitz.
\newblock On the cryptographic complexity of the worst functions.
\newblock In {\em TCC}, pages 317--342, 2014.
\newblock Full Version (2017) available at:
  \url{https://www.microsoft.com/en-us/research/wp-content/uploads/2017/03/BIKK.pdf}.

\bibitem{Boneh[99]}
Eli Biham, D.~Boneh, and Omer Reingold.
\newblock Breaking generalized {D}iffie-{H}ellman modulo a composite is no
  easier than factoring.
\newblock {\em Information Processing Letters}, 70(2):83--87, 1999.

\bibitem{Blakley[79]}
G.R. Blakley.
\newblock Safeguarding cryptographic keys.
\newblock {\em American Federation of Information Processing}, 48:313--318,
  1979.

\bibitem{Blundo[92]}
C.~Blundo, A.~De Santis, L.~Gargano, and U.~Vaccaro.
\newblock On the information rate of secret sharing schemes.
\newblock In {\em CRYPTO}, pages 149--169, 1992.

\bibitem{Blundo[96]}
C.~Blundo and D.~R. Stinson.
\newblock Anonymous secret sharing schemes.
\newblock {\em Designs, Codes and Cryptography}, 2:357--390, 1996.

\bibitem{Peter[09]}
Peter Bogetoft, Dan~Lund Christensen, Ivan Damgard, Martin Geisler, Thomas
  Jakobsen, Mikkel Kroigaard, Janus~Dam Nielsen, Jesper~Buus Nielsen, Kurt
  Nielsen, Jakob Pagter, Michael Schwartzbach, and Tomas Toft.
\newblock Secure multiparty computation goes live.
\newblock In {\em Financial Cryptography and Data Security}, pages 325--343,
  2009.

\bibitem{Brick[89]}
Ernest~F. Brickell.
\newblock Some ideal secret sharing schemes.
\newblock {\em Journal of Combin. Math. and Combin. Comput.}, 6:105--113, 1989.

\bibitem{Bryant[17]}
Darryn Bryant and Daniel Horsley.
\newblock Steiner triple systems without parallel classes.
\newblock {\em SIAM J. Discrete Math}, 31(4):693--696, 2017.

\bibitem{Capo[93]}
Renato~M. Capocelli, Alfredo~De Santis, Luisa Gargano, and Ugo Vaccaro.
\newblock On the size of shares for secret sharing schemes.
\newblock {\em Journal of Cryptology}, 6(3):157--168, 1993.

\bibitem{Col[92]}
Charles Colbourn, Spyros~S. Magliveras, and Rudolf~A. Mathon.
\newblock Transitive steiner and kirkman triple systems of order 27.
\newblock {\em Mathematics of Computation}, 58(197):441--450, 1992.

\bibitem{Charles[06]}
Charles~J. Colbourn and Jeffrey~H. Dinitz.
\newblock {\em Handbook of {C}ombinatorial {D}esigns}.
\newblock Discrete Mathematics and Its Applications. Chapman and Hall/CRC,
  2006.

\bibitem{Tri[99]}
Charles~J. Colbourn and Alex Rosa.
\newblock {\em Triple Systems}.
\newblock Oxford Mathematical Monographs. Clarendon Press and Oxford University
  Press, 1999.

\bibitem{Corless[96]}
R.~M. Corless, G.~H. Gonnet, D.~E.~G. Hare, D.~J. Jeffrey, and D.~E. Knuth.
\newblock On the {L}ambert{W} function.
\newblock {\em Adv. Comput. Math.}, 5:329--359, 1996.

\bibitem{Csi[96]}
L\'{a}szl\'{o} Csirmaz.
\newblock The dealer's random bits in perfect secret sharing schemes.
\newblock {\em Studia Sci. Math. Hungar.}, 32(3-4):429--437, 1996.

\bibitem{Csi[97]}
L\'{a}szl\'{o} Csirmaz.
\newblock The size of a share must be large.
\newblock {\em Journal of Cryptology}, 10(4):223--231, 1997.

\bibitem{Das[20]}
Shagnik Das.
\newblock A brief note on estimates of binomial coefficients.
\newblock URL: \url{http://page.mi.fu-berlin.de/shagnik/notes/binomials.pdf}.

\bibitem{Lou[20]}
Louis De{B}iasio and Michael Tait.
\newblock Large monochromatic components in 3-edge-colored steiner triple
  systems.
\newblock {\em Mathematics of Computation}, 28(6):428--444, 2020.

\bibitem{Yvo[89]}
Yvo~G. Desmedt and Yair Frankel.
\newblock Shared generation of authenticators and signatures (extended
  abstract).
\newblock In {\em CRYPTO}, pages 457--469, 1991.

\bibitem{Diri[37]}
Peter Gustav~Lejeune Dirichlet.
\newblock Beweis des satzes, dass jede unbegrenzte arithmetische progression,
  deren erstes glied und differenz ganze zahlen ohne gemeinschaftlichen factor
  sind, unendlich viele primzahlen enth\"{a}lt [proof of the theorem that every
  unbounded arithmetic progression, whose first term and common difference are
  integers without common factors, contains infinitely many prime numbers].
\newblock {\em Abhandlungen der K\"{o}niglichen PreuBischen Akademie der
  Wissenschaften zu Berlin}, 48:45--71, 1837.

\bibitem{Danny[93]}
Danny Dolev, Cynthia Dwork, Orli Waarts, and Moti Yung.
\newblock Perfectly secure message transmission.
\newblock {\em Journal of the ACM (JACM)}, pages 17--47, 1993.

\bibitem{Zeev[11]}
Zeev Dvir, Parikshit Gopalan, and Sergey Yekhanin.
\newblock Matching vector codes.
\newblock {\em SIAM Journal on Computing}, 40(4):1154--1178, 2011.

\bibitem{Zeev[15]}
Zeev Dvir and Sivakanth Gopi.
\newblock 2-server pir with sub-polynomial communication.
\newblock In {\em STOC}, pages 577--584, 2015.

\bibitem{Klim[09]}
Klim Efremenko.
\newblock 3-query locally decodable codes of subexponential length.
\newblock In {\em STOC}, pages 39--44, 2009.

\bibitem{Fer[19]}
Asaf Ferber and Matthew Kwan.
\newblock Almost all {S}teiner triple systems are almost resolvable.
\newblock {\em arXiv: preprint, arXiv:1907.06744}, 2019.
\newblock \href {http://arxiv.org/abs/1907.06744} {\path{arXiv:1907.06744}}.

\bibitem{Frankl[16]}
Peter Frankl and Norihide Tokushige.
\newblock Invitation to intersection problems for finite sets.
\newblock {\em J. Combinatorial Theory Series A}, 144, 2016.

\bibitem{Gold[82]}
Shafi Goldwasser and Silvio~M Micali.
\newblock Probabilistic encryption \& how to play mental poker keeping secret
  all partial information.
\newblock In {\em STOC}, pages 365--377, 1982.

\bibitem{Goyal[06]}
Vipul Goyal, Omkant Pandey, Amit Sahai, and Brent Waters.
\newblock Attribute-based encryption for fine-grained access control of
  encrypted data.
\newblock In {\em 13th {ACM} conference on {C}omputer and {C}ommunications
  {S}ecurity}, pages 89--98, 2006.

\bibitem{Knuth[94]}
Ronald~L. Graham, Donald~E. Knuth, and Oren Patashnik.
\newblock {\em Concrete Mathematics, Second Edition}.
\newblock Addison-Wesley Professional, 1994.

\bibitem{Gro[00]}
Vince Grolmusz.
\newblock Superpolynomial size set-systems with restricted intersections mod 6
  and explicit ramsey graphs.
\newblock {\em Combinatorica}, 20:71--86, 2000.

\bibitem{Mic[98]}
Michael Harkavy, J.~Doug Tygar, and Hiroaki Kikuchi.
\newblock Electronic auctions with private bids.
\newblock In {\em Proceedings of the 3rd conference on USENIX Workshop on
  Electronic Commerce}, pages 6--6, 1998.

\bibitem{Ito[87]}
Mitsuru Ito, Akira Saito, and Takao Nishizeki.
\newblock Secret sharing scheme realizing general access structure.
\newblock In {\em Globecom}, pages 99--102, 1987.

\bibitem{Ant[14]}
Antoine Joux, Andrew Odlyzko, and C\'{e}cile Pierrot.
\newblock The past, evolving present, and future of the discrete logarithm.
\newblock {\em Open Problems in Mathematics and Computational Science}, pages
  5--36, 2014.

\bibitem{Karch[93]}
M.~Karchmer and A.~Wigderson.
\newblock On span programs.
\newblock In {\em Structure in Complexity Theory Conference}, pages 102--111,
  1993.

\bibitem{Yung[04]}
Aggelos Kiayias and Moti Yung.
\newblock The vector-ballot e-voting approach.
\newblock In {\em International Conference on Financial Cryptography}, pages
  72--89, 2004.

\bibitem{Kishi[02]}
Wataru Kishimoto, Koji Okada, Kaoru Kurosawa, and Wakaha Ogata.
\newblock On the bound for anonymous secret sharing schemes.
\newblock {\em Discrete Applied Mathematics}, 121:193--202, 2002.

\bibitem{Kra[95]}
E.S. Kramer and R.~Mathon.
\newblock Proper {S}(t,$\mathcal{K}, v)$'s for $t \geq 3, v \leq 16,
  |\mathcal{K}| > 1$ and their extensions.
\newblock {\em J. Combin. Des.}, pages 411--425, 1995.

\bibitem{Hugo[93]}
Hugo Krawczyk.
\newblock Secret sharing made short.
\newblock In {\em CRYPTO}, pages 136--146, 1993.

\bibitem{Kwan[20]}
Matthew Kwan.
\newblock Almost all {S}teiner triple systems have perfect matchings.
\newblock {\em Proceedings of the London Mathematical Society},
  121(6):1468--1495, 2020.

\bibitem{LambertOrg[58]}
Johann~Heinrich Lambert.
\newblock Observationes variae in mathesin puram.
\newblock {\em Acta Helv. Phys. Math. Anat. Bot. Med.}, 3(5), 1758.

\bibitem{LiuStoc[18]}
Tianren Liu and Vinod Vaikuntanathan.
\newblock Breaking the circuit-size barrier in secret sharing.
\newblock In {\em ACM SIGACT Symposium on Theory of Computing (STOC)}, pages
  699--708, 2018.

\bibitem{Liu[17]}
Tianren Liu, Vinod Vaikuntanathan, and Hoeteck Wee.
\newblock Conditional disclosure of secrets via non-linear reconstruction.
\newblock In {\em CRYPTO}, pages 758--790, 2017.

\bibitem{Liu[18]}
Tianren Liu, Vinod Vaikuntanathan, and Hoeteck Wee.
\newblock Towards breaking the exponential barrier for general secret sharing.
\newblock In {\em EUROCRYPT}, pages 567--596, 2018.

\bibitem{Micali[91]}
Silvio Micali and Phillip Rogaway.
\newblock Secure computation.
\newblock In {\em CRYPTO}, pages 392--404, 1991.

\bibitem{Pat[17]}
Patrick Morris.
\newblock Random {S}teiner triple systems.
\newblock Master's thesis, Freie Universit\"{a}t Berlin, 2017.

\bibitem{Pat[08]}
Patric~R.J. \"{O}sterg\r{a}rd and Olli Pottonen.
\newblock There exists no {S}teiner system {S}(4,5,17).
\newblock {\em Journal of Combinatorial Theory, Series A}, 115(8):1570--1573,
  2008.

\bibitem{Phillips[92]}
Steven~J. Phillips and Nicholas~C. Phillips.
\newblock Strongly ideal secret sharing schemes.
\newblock {\em Journal of Cryptology}, 5:185--191, Oct. 1992.

\bibitem{Pipp[89]}
Nicholar Pippenger and Joel Spencer.
\newblock Asymptotic behavior of the chromatic index for hypergraphs.
\newblock {\em J. Combin.Theory Ser. A}, 51(1):24--42, 1989.

\bibitem{Deng[07]}
Ying pu~Deng, Li~feng Guo, and Mu~lan Liu.
\newblock Constructions for anonymous secret sharing schemes using
  combinatorial designs.
\newblock {\em Acta Mathematicae Applicatae Sinica}, 23:67--78, January 2007.

\bibitem{Chau[71]}
D.~K. Ray-Chaudhuri and R.~M. Wilson.
\newblock Solution of {K}irkman's schoolgirl problem.
\newblock {\em Combinatorics (Proc. Sympos. Pure Math., Vol. XIX, Univ.
  California, Los Angeles, Calif., 1968), Amer. Math.Soc., Providence, R.I.},
  pages 187--203, 1971.

\bibitem{Rosen[10]}
Kenneth~H. Rosen.
\newblock {\em Handbook of Discrete and Combinatorial Mathematics, Second
  Edition}.
\newblock Chapman \& Hall/CRC, 2010.

\bibitem{Berry[99]}
Berry Schoenmakers.
\newblock A simple publicly verifiable secret sharing scheme and its
  application to electronic voting.
\newblock In {\em CRYPTO}, pages 148--164, 1999.

\bibitem{Sehrawat[17]}
Vipin~Singh Sehrawat, Yogendra Shah, Vinod~Kumar Choyi, Alec Brusilovsky, and
  Samir Ferdi.
\newblock Certificate and signature free anonymity for {V2V} communications.
\newblock In {\em IEEE Vehicular Networking Conference (VNC)}, pages 139--146,
  2017.

\bibitem{Shamir[79]}
Adi Shamir.
\newblock How to share a secret.
\newblock {\em Commun. ACM}, 22:612--613, 1979.

\bibitem{Gusta[88]}
Gustavus~J. Simmons.
\newblock How to (really) share a secret.
\newblock In {\em CRYPTO}, pages 390--448, 1988.

\bibitem{Steiner[53]}
Jakob Steiner.
\newblock Combinatorische aufgaben.
\newblock {\em J. Reine Angew. Math.}, 45:181--182, 1853.

\bibitem{Steiner[96]}
Michael Steiner, Gene Tsudik, and Michael Waidner.
\newblock Diffie-hellman key distribution extended to group communication.
\newblock In {\em 3rd {ACM} conference on Computer and Communications
  Security}, pages 31--37, 1996.

\bibitem{Stinson[87]}
D.~R. Stinson and S.~A. Vanstone.
\newblock A combinatorial approach to threshold schemes.
\newblock In {\em CRYPTO}, pages 330--339, 1987.

\bibitem{Tassa[11]}
Tamir Tassa.
\newblock Generalized oblivious transfer by secret sharing.
\newblock {\em Designs, Codes and Cryptography}, 58:11--21, 2011.

\bibitem{Teir[94]}
Luc Teirlinck.
\newblock Some new 2-resolvable steiner quadruple systems.
\newblock {\em Designs, Codes and Cryptography}, pages 5--10, 1994.

\bibitem{Touch[39]}
Jacques Touchard.
\newblock Sur les cycles des substitutions.
\newblock {\em Acta Math}, 70(1):243--297, 1939.

\bibitem{Dijk[95]}
Marten van Dijk.
\newblock On the information rate of perfect secret sharing schemes.
\newblock {\em Des. Codes Cryptography}, 6(2):143--169, 1995.

\bibitem{Sergey[08]}
Sergey Yekhanin.
\newblock Towards 3-query locally decodable codes of subexponential length.
\newblock {\em Journal of the ACM (JACM)}, 55(1):1--16, 2008.

\bibitem{Yuca[99]}
J.L. Yucas.
\newblock Extending {AG}(4, 2) to {S}(4,\{5, 6\}, 17).
\newblock {\em J. Combin. Des.}, pages 113--117, 1999.

\bibitem{Yuca[02]}
J.L. Yucas.
\newblock Extensions of {PG}(3, 2) with bases.
\newblock {\em Australas. J. Combin.}, pages 125--131, 2002.

\end{thebibliography}
\end{document}